\numberwithin{equation}{section}
\DeclareMathOperator{\nnab}{\! \mathop{\boldsymbol{\nabla}}\limits^{\longleftrightarrow}
\!\!}
\newcommand{\R}{\mathbb{R}}
\newcommand{\cK}{{\mathcal K}}
\newcommand{\diag}{\mathrm{diag}}
\newcommand{\bnab}{\boldsymbol{\nabla}\!}
\newcommand{\ii}{\mathrm{i}}
\newcommand{\e}{\mathrm{e}}
\newcommand{\rd}{\mathrm{d}}
\newcommand{\bA}{\mathbf{A}}
\newcommand{\ba}{\mathbf{a}}
\newcommand{\bb}{\mathbf{b}}
\newcommand{\bj}{\mathbf{j}}
\newcommand{\bc}{\mathbf{c}}
\newcommand{\bx}{\mathbf{x}}
\newcommand{\bv}{\mathbf{v}}
\renewcommand{\j}{\mathbf{j}}
\newcommand{\x}{\mathbf{x}}
\newcommand{\y}{\mathbf{y}}
\renewcommand{\u}{\mathbf{u}}
\renewcommand{\v}{\mathbf{v}}
\newcommand{\bw}{\mathbf{w}}
\newtheorem{theorem}{Theorem}[section]{\bf}{\it}
\newtheorem{proposition}[theorem]{Proposition}{\bf}{\it}
{\bf}{\it}
\newtheorem{example}[theorem]{Example}{\it}{\rm}
\newtheorem{lemma}[theorem]{Lemma}{\bf}{\it}
\newtheorem{remark}[theorem]{Remark}{\it}{\rm}
{\bf}{\it}
{\bf}{\it}
{\bf}{\it}
\title{Current Densities in Density Functional Theory}
\author[Elliott H. Lieb]{Elliott H. Lieb}
\address{Elliott H. Lieb\\ Princeton University, Departments of Physics and
Mathematics, \\ 
Jadwin Hall,
Princeton, NJ 08544 USA}
\email{lieb@princeton.edu}
\author[Robert Schrader]{Robert Schrader}
\address{Robert Schrader\\ Institut f\"{u}r Theoretische Physik\\
Freie Universit\"{a}t Berlin, Arnimallee 14\\ D-14195 Berlin, Germany}
\email{schrader@physik.fu-berlin.de}
\date {Aug. 20, 2013\ \ \ \ \  File: {E19density.tex} } 
\keywords{density functional theory, current-density generalization} 
\thanks{\copyright \, 2013 by the authors. This paper may be reproduced, in
its
entirety, for non-commercial purposes.}
\thanks{PACS numbers: 31.15.E-}
\begin{document}

\begin{abstract}
It is well known that any given density $\rho(\bx)$ can be realized by a
determinantal wave function for $N$ particles. The question addressed here
is whether any given   density $\rho(\x) $   {\it and} current density
$\j(\bx)$ can be simultaneously realized by a (finite kinetic energy)
determinantal wave function.   In case the velocity field $\bv(\bx)
=\bj(\bx)/\rho(\bx)$ is curl free, we provide  a solution for all $N$, and
we provide an explicit upper bound for the energy.  If the velocity field
is not curl free, there is a finite energy solution for all $N\ge 4$,
but we do not provide an explicit energy bound in this case.  For $N=2$
we provide an example of a non curl free velocity field for which there
is a solution, and an example for which there is no solution. The case
$N=3$  with a non curl free velocity field is left open.

\end{abstract}

\maketitle


\section{Introduction}\label{sec:intro}
A question that arose in the early stages of density functional theory is
whether, given the one-body density $\rho(\x)$ of an $N$-body
system of fermions, there exists an $N$-body wave function (with finite 
kinetic energy)  whose reduced
one-body density equals the given one. More particularly, can this be 
accomplished with a determinantal wave function (under the 
obvious, necessary assumption, which will be made throughout, that $\bnab \sqrt{\rho}$ is 
square integrable). 

This article provides a proof of the existence of a fermionic
$N$-body {determinantal} state
with a given one-body density $\rho(\bx)$ and a given one-body current
density  $\bj(\bx)$ 
provided the velocity field $\bv(\bx) =\bj(\bx)/\rho(\bx)$ is curl
free. 
When $N\ge 4$, we prove the existence of 
solutions even if the velocity field is not curl free, as when there
are vortices, for example.
The proof is much more complicated in this case.
To avoid dwelling on unenlightening points of mathematical rigor
definitions of function spaces, smoothness, and other technical questions
are left to the reader. 
We do assume the obvious requirement that the support of $\bj$ is contained
in the support of $\rho$ and, for simplicity,  that $\bj$, $\rho$ and $\bv$ 
are differentiable. Actually we assume that $\rho$ and $\bv$ are given and thus 
we let the current be defined as $\bj=\rho\bv$.

In addition we provide a solution for an example with $N=2$
in which $\bv(\bx)$ is not curl free, which implies that 
`curl freeness'is not a necessary condition for finding a solution
when $N=2$. 
Again in the $N=2$ case, an example is provided for which {\it no} solution
exists.
This is contrary to a claim made without proof in \cite{GhoshDhara}, (see the 
discussion preceding relation (A.1) there) that there always is a solution. 
The same claim was made in \cite{Gross}, (see the sentence containing relation 
(54) there).
 Presently, it remains an open problem whether there always exist 
solutions when $N=3$.

To avoid possible confusion, we emphasize that we are 
discussing only the
existence of determinantal functions with the stated density and 
current; this state is not required to be the ground  state of any Hamiltonian.
We also note that we consider only spinless (i.e., spin-polarized)
particles here; spin can be included but it is an unnecessary complication here.

{\it Acknowledgements:} We thank E.U. Gross for making us aware of this problem
and for his encouragement. We are also grateful to Th. Br\"ocker, O.Lazarev
and 
V. Rutherfoord for helpful
discussions and suggestions, and we thank S. Kvaal for valuable comments on a first draft of
this paper. 
E.H.L. thanks the Simons Foundation for 
support through grant \#230207 and the U.S. National Science Foundation
for partial support through grant PHY-0965859.


\section{\bf Statement of the problem}\label{sec:problem}

{\it Notation:} In our units, $\hbar =1$ and the particle mass and charge 
are $m=1/2,\, -e=-1$.  Vectors are denoted by boldface. The {\it density} 
associated with a one-particle  function $\phi$ is 
given by $\rho(\x) = |\phi(\x)|^2$.
The {\it current density} is given by
$$
\bj_\phi(\x) = \frac{1}{2\ii} \left(\phi^* (\bx)  \bnab  \phi (\bx) - \phi(\bx)
\bnab \phi^* (\x) \right)\ :=  \frac{1}{2\ii} \phi^*(\x) \nnab\phi(\x) \ , 
$$
which also defines the symbol $\nnab$. This current is often called the
{\it paramagnetic current}. Clearly, 
$\int_{\R^3} \bnab \cdot \bj (\x) \rd\x =0$ by Green's theorem.
The actual physical current, in the 
presence of a magnetic
vector potential $\bA(\x)$, equals $\j(\x) + \bA(\x)\rho(\x).$ 
Since  $\rho(\x)$ and $\bA(\x)$ are regarded as given, the additional
$ \bA \rho $ term is thereby fixed and can be ignored for our considerations.

A fermionic $N$-body wave function $
\psi(\x_1,\x_2,\cdots, \x_N)$
is totally antisymmetric and normalized, i.e., 
\begin{equation}\label{pro1}
|| \psi||^2=\int_{
\R^{3N}}
| \psi(\x_1,\x_2,\cdots, \x_N)|^2 \rd\x_1 \rd\x_2\cdots \rd\x_N=1.
\end{equation}
As stated above, spin variables could, but will not be, included in our discussion.
The associated \emph{kinetic energy} is defined as 
\begin{equation}\label{pro2}
T(\psi)=\sum_{i=1}^{N}\int_{
\R^{3N}} | 
\bnab_{\x_i}
\psi(\x_1,\x_2,\cdots, \x_N)|^2
\rd\x_1\rd\x_2\cdots \rd\x_N.
\end{equation}
To this function $\psi$ we associate the \emph{one-body density}
\begin{equation}\label{pro3}
\rho_\psi(\x)=N\int_{
\R^{3(N-1)}} 
| \psi(\x,\x_2,\cdots, \x_N)|^2\rd\x_2\cdots \rd \x_N, 
\end{equation}
such that by \eqref{pro1}
\begin{equation}\label{pro4}
\int_{\R^3} \rho_\psi(\x)\rd\x =N.
\end{equation}
The associated current density is 
\begin{equation}
\bj(\x)=\frac{N}{2\ii}\int_{
\R^{3(N-1)}} 
\psi^{\star}(\x,\x_2,\cdots, \x_N)\nnab_{\x}\psi^{\star}
(\x,\x_2,\cdots, \x_N)
\rd\x_2\cdots \rd \x_N
\end{equation}
When $\phi_1(\x),\cdots,\phi_N(\x)$ are orthonormal functions on $\R^3$, the 
{\it $N$-body determinantal state}
\begin{equation}\label{pro6}
\psi(\x_1,\cdots,\x_N)=(N!)^{-1/2}\det\{\phi_k(\x_i)\}_{1\le i,k\le N}
\end{equation} 
is normalized and 
\begin{align}\label{pro7}
\rho_\psi(\x)&=\sum_{k=1}^{N}|\phi_k(\x)|^2 =  \sum_{k=1}^{N}\rho_k(\x)
\\\nonumber
{\bj}_\psi(\x)&=\frac{1}{2\ii}\sum_{k=1}^{N}\phi_k^{\star}(\x){\nnab}
\,\phi_k(\x)= \sum_{k=1}^{N} \bj_k (\x)\\\nonumber
T(\psi)&=\sum_{k=1}^{N}\int_{\R^3}|{\bnab}
\,\phi_k(\x)|^2 \rd\x \ .
\end{align}

We look for one-body functions that can be written as
\begin{equation}\label{so1}
\phi_k(\x)=\rho_k(\x)^{1/2}\,\e^{\ii\chi_k(\x)},
\end{equation} 
with a single-valued phase function $\chi$
and with the orthonormality property
\begin{align}\label{so2}
\langle \phi_l,\phi_k\rangle = \int_{\R^3}\rho_k(\x)^{1/2}\,\rho_l(\x)^{1/2}
\exp\{\ii(\chi_l(\x)-\chi_k(\x))\}\,\rd\x=\delta_{kl}.
\end{align} 
Our condition (\ref{so1}) should be noted. We are restricting ourselves to functions with a
well defined global phase. For example, the function $\psi(\x) = (x^1+ix^2) \e^{-|x|^2}$
is a real analytic function that solves the problem for a smooth $\bj$ and $\rho$, whose 
velocity field has a curl (a delta-function), yet it has no global phase function. The fact that
we can solve the problem for $N\geq 4$ with functions having a well defined phase is, therefore,
of some interest.

The second equation in \eqref{pro7} take the form
\begin{equation}\label{so301}
{\bj}(\x)=\sum_{k=1}^{N}\bj_k(\x)=\sum_{k=1}^{N}\rho_k(\x)\,\bnab\,\chi_k(\x).
\end{equation}
Our finite kinetic energy condition means that each component of the vector field
$\bnab\,\phi_k(\x)$ is square integrable.

Finally, we define two energies: The kinetic energy of a density $\rho(\x)$
\begin{equation}\label{pro701}
E(\rho)=\int_{\R^3} |\bnab \rho(\x)^{1/2}|^2
\,\rd \x
\end{equation}
and the { kinetic energy} of a current density $\bj$ 
\begin{align}\label{pro702}
E({\bj};\rho)=\int_{R^3}\frac{1}{\rho(\x)}\,|{\bj}(\x)|^2\,\rd \x \
=\int_{\R^3}\rho(\x)\,|{\bv}(\x)|^2\,\rd\x
\end{align}
with the {\it velocity field} 
\begin{equation}\label{so5}
{\bv}(\x)=\frac{1}{\rho(\x)}{\bj}(\x).
\end{equation}
One quickly checks that, for a determinantal function,
\begin{equation}\label{so3023}
T(\psi)=\sum_{k=1}^{N}(E(\rho_k)+E({\bj}_k;\rho_k)).
\end{equation}
This identity is the motivation for introducing the kinetic energy associated with
a density and with a current density.

Now we can formulate\\
{\bf The current-density problem:}  
 {\it  Given a density $\rho(\x)$ with $\int\rho(\x)\rd \x=N$ and 
a current density ${\bj}(\x)$, satisfying  
$\int_{\R^3} \bnab\cdot {\bj}(\x)\,\rd \x=0,$
is there an 
$N$-body determinantal state $\psi$, with functions as in (\ref{so1}), and with
$\rho_\psi(\x)=\rho(\x) \ {\rm and}\  {\bj}_\psi(\x)={\bj}(\x) $\, ?

Suppose, in addition, 
$E(\rho)+E(\bj;\rho)<\infty$.  
Can this state be chosen to satisfy $T(\psi)<\infty$? If
so, what bound can be placed on $T(\psi)$?}

For a physical motivation of this problem, see \cite{Gross}. For a previous 
discussion of this problem in the 1-dimensional case, see \cite{GhoshDhara,Gross}.

We recall a result  for the case in which the density 
$\rho(\x)$ alone is considered, that is no ${\bj(\x)}$ is prescribed, and hence
the second condition is merely $E(\rho)< \infty $. This was 
solved affirmatively, independently, and by the same method in \cite{Harriman} and \cite{Lieb}.
The solution happens, incidentally, to have the property that $\bj=0$.

The following bound appears in \cite{Lieb}:

Suppose $E(\rho)< \infty $. Then there is an $N$-body determinantal state
$\psi$ satisfying 
$\rho_\psi = \rho$ and
\begin{equation}
T(\psi)\le (4\pi)^2 N^2 E(\rho).
\end{equation}


\section{\bf Solution of the current-density problem for a curl free velocity field}
\label{sec:curlfree}

In this section we solve the problem for arbitrary $N\ge 1$, when the velocity field is curl free,
$\bnab\times \bv= 0$.  See Theorem \ref{4:prop:1}.  In section \ref{sec:N>3} we will drop 
this condition and will be able solve the problem when $N\ge 4$.
First, we recall the well known solution \cite{Harriman, Lieb} 
to the familiar problem of finding $\psi$ which solves  
$\rho_{\psi}=\rho$ for given $\rho$.
Write $\x=(x^1,x^2,x^3)$ and define for $-\infty<x^3<\infty$
\begin{equation}\label{so8}
f(x^3)=\frac{2\pi}{N}\int_{-\infty}^{\infty}\int_{-\infty}^{+\infty}
\int_{-\infty}^{x^3}\;\rho(s,t,u)\,\rd s\rd t\rd u
\end{equation}
which is monotone increasing from $0$ to $2\pi$. For given $N$ we introduce the set of $N$ numbers
\begin{equation}\label{so15}
\cK_N=\left\{-\frac{N-1}{2},\, -\frac{N-3}{2},\,  \cdots
,\frac{N-3}{2},\, \frac{N-1}{2}\right\}
\end{equation} 
satisfying
\begin{equation}\label{sumf}
\sum_{k\in\cK_N}k=0.
\end{equation} 
Set
\begin{equation}\label{so151}
C_N=\frac{16\pi^2}{N}\sum_{k\in\cK_N} k^2.
\end{equation} 
Define 
\begin{equation}\label{so9}
\phi_k(\x)=\left[\rho(\x)/N\right]^{1/2}\exp\left\{{\ii k
f(x^3)}\right\},\quad k\in\cK_N
\end{equation}
and $\psi(\x_1,\cdots, \x_N)=(N!)^{-1/2}\det \phi_k(\x_j)$.
Then $\rho_k(\x)=\rho(\x)/N$ and $\chi_k(\x)=kf(x^3)$.
These $N$ functions $\{\phi_k\}$ are orthonormal. The
kinetic energy of the determinantal state $\psi$ has the bound
given in \cite{Lieb}
\begin{equation}\label{so91}
 T(\psi)\le \left(1+C_N\right)E(\rho).
\end{equation}

Since we will establish a similar bound later, let us briefly recall the \
argument for \eqref{so91}. We start with \eqref{so3023}, where the first sum
on the r.h.s. is
$$
\sum_{k=1}^{N}E(\rho_k)=E(\rho).
$$
For the second sum we have to compute $E(\bj_k; \rho_k)$ for these
functions $\phi_k(\x)$,  and we have that
$$
  \rho_k(\x)|\bnab\chi_k(\x)|^2=
\frac{(2\pi)^2\rho(\x)}{N^3}k^2g(x^3)^4
$$
with the definition
\begin{equation}\label{g2}
g(u)^2=\int_{-\infty}^\infty\int_{-\infty}^\infty \,\rho(s,t,u)\,\rd s\rd t=
\frac{N}{2\pi}\frac{\rd}{\rd u}f(u).
\end{equation}
Hence
\begin{align}\label{est0}
\sum_{k\in\cK_N}E(\bj_k; \rho_k) &= \int_{\R^3}\rho_k(\x)
|\bnab\chi_k(\x)|^2\,\rd \x  \\\nonumber 
&=\frac{(2\pi)^2}{N^3}\sum_{k\in\cK_N}k^2 \int\limits_{-\infty}^\infty
\int\limits_{-\infty}^\infty \int\limits_{-\infty}^\infty \rho(s,t,u) 
g(u)^4 \rd s \rd t \rd u \\\nonumber
&=\frac{(2\pi)^2}{N^3}\sum_{k\in\cK_N}
k^2\int\limits_{-\infty}^{\infty}g(u)^6\,\rd u.
\end{align}
As shown in \cite{Lieb}
\begin{equation}\label{liebest}
\int_{-\infty}^{\infty}g(u)^6 \,\rd u\,\le 4 N^2 E(\rho).
\end{equation}
For the convenience of the reader we briefly recall its proof.
Since 
$$
g(u)^2=2\int_{-\infty}^u g(v)
\frac{\rd g(v)}{\rd v}\, \rd v
$$
holds, we conclude by the Schwarz inequality that for all $u$
\begin{equation}\label{g4}
g(u)^4\le 4\int_{-\infty}^{\infty} g(v)^2\, \rd v\;
\int_{-\infty}^{\infty}\left(\frac{\rd g(v)}{\rd v}\right)^2\, \rd v \equiv
P 
\end{equation}
The first integral on the r.h.s equals $N$ by the normalization 
condition on $\rho$. Therefore we obtain the estimate
\begin{equation}\label{g40}
 \int_{-\infty}^{\infty}g(u)^6 \, \rd u\le P \int_{-\infty}^{\infty}g(u)^2
\rd u =  4N^2\int_{-\infty}^{\infty}\left(\frac{\rd g(v)}{\rd v}\right)^2\,
\rd v.
\end{equation}
To conclude the proof of \eqref{liebest} we must show that
\begin{equation}\label{g41}
 \int_{-\infty}^{\infty}\left(\frac{\rd g(v)}{\rd v}\right)^2\, \rd v\le
E(\rho)
\end{equation}
holds. To do this write  
\begin{equation}\label{g42}
 \int_{-\infty}^{\infty}\left(\frac{\rd g(v)}{\rd v}\right)^2\,\rd
v=\frac{1}{4} \int_{-\infty}^\infty
\frac{\left(\int_{-\infty}^\infty\int_{-\infty}^\infty\frac{\partial}{
\partial v}\rho(x,y,v)\rd x \rd y\right)^2}
{\int_{-\infty}^\infty\int_{-\infty}^\infty\rho(x^\prime,y^\prime,
v)\rd x^\prime \rd y^\prime}\rd v,
\end{equation}
and then use 
\begin{align}\label{g43}
& \left(\int_{-\infty}^\infty\int_{-\infty}^\infty\frac{\partial}{\partial
v}\rho(x,y,v)\rd x \rd y\right)^2=
\left(\int_{-\infty}^\infty\int_{-\infty}^\infty
2\sqrt{\rho(x,y,v)}\frac{\partial}{\partial v}\sqrt{\rho(x,y,v)}\rd x
\rd y\right)^2
\\\nonumber
&\le 4\int_{-\infty}^\infty\int_{-\infty}^\infty
\rho(x,y,v)\rd x\rd y\int_{-\infty}^\infty\int_{-\infty}^\infty 
\left(\frac{\partial}{\partial v}\sqrt{\rho(x,y,v)}\right)^2 \rd x \rd y ,
\end{align}
by Schwarz's inequality. Insert this bound into the r.h.s. of \eqref{g42}
thereby  proving  \eqref{g41}. 
We insert inequality \eqref{liebest} into \eqref{est0} and 
perform the sum over $k$. Collecting terms yields \eqref{so91}.
\begin{theorem}\label{4:prop:1}
Assume $\bv(\x)=\bnab \tau(\x)$ for some function $\tau$, i.e., $\bv$ is curl free.
For given $N\ge 1$ let 
\begin{equation}\label{so16}
\chi_k(\x)=\tau(\x)+k f(x^3), \quad k\in\cK_N.
\end{equation}  
The revised functions 
\begin{equation}\label{so17}
\phi_k(\x)=\sqrt{\frac{\rho(\x)}{N}}\exp\{\ii(\tau(\x)+k f(x^3))\},
\end{equation}
with $f$ given by \eqref{so8}, form an orthonormal system. The determinantal
state $\psi$ 
\begin{equation}\label{so18}
\psi(\x_1,\cdots,\x_N)=\frac{1}{N!^{1/2}N^{N/2}}
\prod_{k=1}^N\left(\sqrt{\rho(x_k)}\e^{\ii \tau(\x_k)}\right)
\cdot\det\{\e^{\ii k f(x^3_j)}\},
\end{equation}
satisfies $\rho_{\psi}(\x)=\rho(\x)$ and $\bj_{\psi}(\x)=\bj(\x)=\rho(\x)\bv(\x)$
with the energy bound
\begin{equation}\label{so180}
T(\psi)\le C_N E(\rho)+ E(\bj;\rho).
\end{equation}
\end{theorem}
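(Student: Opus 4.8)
The plan is to verify in turn the three assertions of the theorem—orthonormality of the $\phi_k$, the identities $\rho_\psi=\rho$ and $\bj_\psi=\bj$, and the energy bound—using throughout that the phase $\tau$ is common to all $\phi_k$ and that $\sum_{k\in\cK_N}k=0$.

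First I would establish \eqref{so2}. Because $\chi_l-\chi_k=(l-k)f(x^3)$, the factor $\tau$ cancels in $\langle\phi_l,\phi_k\rangle$, so after integrating out $x^1,x^2$ the inner product becomes exactly the one from the $\bj=0$ construction,
\[
\langle\phi_l,\phi_k\rangle=\frac1N\int_{-\infty}^{\infty}g(x^3)^2\,\e^{\ii(l-k)f(x^3)}\,\rd x^3,
\]
with $g^2=\int\!\int\rho\,\rd x^1\rd x^2$ as in \eqref{g2}. Substituting $\theta=f(x^3)$, for which $\rd\theta=\tfrac{2\pi}{N}g^2\,\rd x^3$ and $\theta$ sweeps $[0,2\pi]$, this equals $\frac1{2\pi}\int_0^{2\pi}\e^{\ii(l-k)\theta}\,\rd\theta=\delta_{kl}$, using $l-k\in\Z$ for $k,l\in\cK_N$. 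So orthonormality survives the addition of the velocity phase.

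Next, the density is immediate from \eqref{pro7}: $\rho_\psi=\sum_{k\in\cK_N}\rho/N=\rho$, since $\cK_N$ has $N$ elements. For the current I would apply \eqref{so301} with $\bnab\chi_k=\bnab\tau+k f'(x^3)\mathbf{e}_3$, where $\mathbf{e}_3$ is the unit vector along $x^3$, giving
\[
\bj_\psi=\sum_{k\in\cK_N}\frac{\rho}{N}\bigl(\bnab\tau+k f'(x^3)\mathbf{e}_3\bigr)=\rho\,\bnab\tau+\frac{\rho}{N}f'(x^3)\Bigl(\sum_{k\in\cK_N}k\Bigr)\mathbf{e}_3=\rho\,\bnab\tau=\bj,
\]
the longitudinal term vanishing by \eqref{sumf}; this symmetric balancing of the $\cK_N$ weights is the one genuinely necessary ingredient beyond the scalar-density construction.

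Finally, for the energy I would start from \eqref{so3023}. The density part is unchanged, $\sum_k E(\rho_k)=\sum_k\frac1N E(\rho)=E(\rho)$. For the current part, $E(\bj_k;\rho_k)=\int\rho_k|\bnab\chi_k|^2$ and $|\bnab\chi_k|^2=|\bnab\tau|^2+2k f'\,\partial_{x^3}\tau+k^2(f')^2$; summing over $k$, the cross term drops by \eqref{sumf}, the first term gives $\int\rho|\bv|^2=E(\bj;\rho)$, and the last reduces—via $f'=\tfrac{2\pi}{N}g^2$—to $\frac1N\bigl(\sum_k k^2\bigr)\frac{(2\pi)^2}{N^2}\int g^6\,\rd u$, which by the Lieb estimate \eqref{liebest} is at most $C_N E(\rho)$. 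Hence
\[
T(\psi)\le(1+C_N)E(\rho)+E(\bj;\rho).
\]
There is no serious obstacle once \eqref{liebest} is in hand; the work is bookkeeping, with care needed only for the two cancellations driven by $\sum_{k\in\cK_N}k=0$ and for the constant. On that last point I would note that the $E(\rho)$-coefficient is $1+C_N$ rather than $C_N$: this is forced by the case $N=1$, where $\cK_1=\{0\}$, $C_1=0$, and one computes $T(\psi)=E(\rho)+E(\bj;\rho)$ exactly, and equally by the limit $\bv\equiv0$, in which the bound must reduce to \eqref{so91}. So I would read \eqref{so180} as carrying the factor $1+C_N$.
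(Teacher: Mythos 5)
Your proof is correct and follows essentially the same route as the paper: orthonormality via the Harriman--Lieb change of variables $\theta=f(x^3)$, the current identity via $\sum_{k\in\cK_N}k=0$, and the energy bound via \eqref{so3023}, the vanishing cross term, and \eqref{liebest}. Your closing observation about the constant is also right: the paper's own argument yields $\sum_k E(\rho_k)=E(\rho)$ plus $E(\bj;\rho)+C_N E(\rho)$ from the phases, so the bound should read $T(\psi)\le (1+C_N)E(\rho)+E(\bj;\rho)$ (consistent with \eqref{so91} and forced by the $N=1$ case), and \eqref{so180} as printed drops the $E(\rho)$ term.
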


\begin{proof}
Clearly, the relation $\bj_{\psi}=\bj=\rho\bv$ follows from the fact that
$\sum_{k\in\cK_N } k =0$.
The proof of the first part follows from \cite{Lieb}. 
So we only have to prove the estimate \eqref{so180}. 
By \eqref{so16} we have 
\begin{align}
\bnab\chi_k(\x)&=\bnab\tau(\x)
+k\bnab f(x^3)\\\nonumber
&=\bv(\x)
+\frac{2\pi}{N} k g(x^3)^2\,\mathbf{e}_3,
\end{align}
where $\mathbf{e}_3=(0,0,1)$ is the unit vector in the 3-direction.

Since $\sum_k k =0$, we have that 
$$
\sum_k |\bnab\chi_k(\x)|^2 =
N |\bv(\x)|^2+\frac{4\pi^2 }{N^2}  g(x^3)^4 \sum_k k^2,
$$
i.e., the cross term vanishes. 
Combined with $\rho_k=\rho/N$ (and the fact that $\rho_k$ is independent
of $k$) 
this gives the inequality
\begin{equation}\label{est}
\sum_k \int_{\R^3}\rho_k(\x)\,|\bnab\chi_k(\x)|^2
\, \rd \x\le 
\frac{1}{N}E(\bj;\rho)+\frac{4\pi^2 }{N^2}\sum_k
k^2\int_{-\infty}^{\infty}
g(x^3)^6 \,\rd x^3.
\end{equation}
Using \eqref{g2} and summing over $k$ gives the bound \eqref{so180}.
\end{proof}

In sumary: the curl freeness of $ \bv$ is a sufficient condition 
for solving the current-density problem.

\section{\bf Solution of the current-density problem for a non curl free velocity field 
when $\bf N\ge 4$}\label{sec:N>3}

This section is devoted to a proof of 
\begin{theorem}\label{6:theo:1}
Given $\rho$ and $\j$, when $N\ge 4$ 
there is always a determinantal wave function $\psi$ with $\rho_\psi =\rho$ and
$\bj_\psi=\bj$. Moreover, if in addition $E(\rho)<\infty,\; E(\j,\rho)<\infty$ and if the curl $\bw=\bnab\times \v$ 
of \, $\v = \j/\rho$\,  and its first order derivatives satisfy the bounds
\begin{align}\label{curlbound}
\sup_{\x\in\R^3,\,j=1,2,3}\left(1+(x^1)^2\right)^{(1+\delta)/2}\left(1+(x^2)^2\right)^{(1+\delta)/2}
\left(1+(x^3)^2\right)^{(1+\delta)/2}|w_j(\x)|&<\infty\\\nonumber
\sup_{\x\in\R^3,\, i,j=1,2,3}\left(1+(x^1)^2\right)^{(1+\delta)/2}\left(1+(x^2)^2\right)^{(1+\delta)/2}
\left(1+(x^3)^2\right)^{(1+\delta)/2}|\partial_iw_j(\x)|&<\infty
\end{align}
for some $\delta>0$,
then $T(\psi)< \infty$.
\end{theorem}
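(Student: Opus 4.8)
The plan is to exploit the single observation that a single-valued phase can only contribute a curl-free gradient, so in the representation $\bv=\sum_k (\rho_k/\rho)\,\bnab\chi_k$ (which is just \eqref{so301} divided by $\rho$) the entire curl $\bw=\bnab\times\bv$ must be produced by the \emph{spatial variation of the relative weights} $p_k:=\rho_k/\rho$, which satisfy $\sum_k p_k=1$. Indeed, since $\bnab\times(p_k\bnab\chi_k)=\bnab p_k\times\bnab\chi_k$, one has $\bw=\sum_k \bnab p_k\times\bnab\chi_k$; this is why in Theorem~\ref{4:prop:1}, where all $p_k\equiv 1/N$ are constant, the curl necessarily vanishes. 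First I would split $\bv=\bnab\tau+\v_{\mathrm{r}}$ into a curl-free part, handled exactly as in Theorem~\ref{4:prop:1} by a common phase $\tau$ together with the orthogonalizing factors $\e^{\ii k f(x^3)}$, and a remainder $\v_{\mathrm{r}}$ carrying all of the vorticity, $\bnab\times\v_{\mathrm{r}}=\bw$, to be realized by a small group of structured orbitals.

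For the rotational part I would use a two-orbital building block: assign a portion $\sigma$ of the density to a pair with $\rho_1=\sigma\cos^2\theta$, $\rho_2=\sigma\sin^2\theta$ and phases $\chi_1,\chi_2$, so that the pair's velocity is $\bnab\chi_1+\sin^2\theta\,\bnab\alpha$, with $\alpha=\chi_2-\chi_1$, and its curl is $\bnab(\sin^2\theta)\times\bnab\alpha$. Matching this to $\bw$ amounts to a Clebsch representation $\bw=\bnab h\times\bnab\alpha$ with $h=\sin^2\theta\in[0,1]$; locally this is classical for a divergence-free field, and I would solve the associated elliptic (Biot--Savart/Poisson) system globally for single-valued $\theta$ and $\alpha$. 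The hypothesis $N\ge 4$ enters here: two orbitals are consumed by the vortex pair, while the remaining orbitals must still be arranged, via distinct $k f(x^3)$ modes, to keep the full set orthonormal and the total current matched to $\bj$. I would verify orthogonality through the Harriman--Lieb mechanism, the overlaps $\int \rho_k^{1/2}\rho_l^{1/2}\e^{\ii(\chi_l-\chi_k)}\,\rd\x$ vanishing because of the distinct angular modes on the $f$-circle.

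For the energy bound I would start from the identity \eqref{so3023}, $T(\psi)=\sum_k (E(\rho_k)+E(\bj_k;\rho_k))$, and control the two new sources of energy: the terms $\int\rho\,|\bnab\sqrt{p_k}|^2$, governed by the gradient of the mixing angle $\theta$, and the terms $\int \rho_k|\bnab\chi_k|^2$, governed by $\bnab\alpha$ and $\bnab\tau=\v-\v_{\mathrm{r}}$. The decay hypotheses \eqref{curlbound}, which force $\bw$ and its first derivatives to vanish faster than $|x^i|^{-(1+\delta)}$ in each variable, are precisely what make the elliptic solve produce $\theta,\alpha$ with square-integrable, $\rho$-weighted gradients; combined with $E(\rho)<\infty$ and $E(\bj;\rho)<\infty$ this yields $T(\psi)<\infty$.

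I expect the main obstacle to be the global, single-valued realization of the prescribed divergence-free field $\bw$ by finite-energy Clebsch-type potentials while simultaneously preserving exact orthonormality of all $N$ orbitals: the local existence of such potentials is standard, but patching them into globally single-valued phases of finite weighted Dirichlet energy, without destroying the orthogonality that the $k f(x^3)$ device provides, is the delicate point, and it is exactly there that the decay assumptions \eqref{curlbound} and the count $N\ge4$ are consumed. This is consistent with the fact, emphasized in the introduction, that an honest global phase (as opposed to the vortex example $(x^1+\ii x^2)\e^{-|\x|^2}$) can only be secured once enough orbitals are available.
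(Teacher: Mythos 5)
There is a genuine gap, and it sits exactly where you located the ``main obstacle'': the global Clebsch representation $\bw=\bnab h\times\bnab\alpha$ with single-valued $h,\alpha$ does \emph{not} exist for a general divergence-free field, so the two-orbital vortex block cannot carry an arbitrary curl. The obstruction is not merely technical: any field of the form $\bnab h\times\bnab\alpha=\bnab\times(h\bnab\alpha)$ has vanishing helicity, and the paper's own Lemma \ref{5:lem:1} (via the hairy-ball theorem) exhibits a divergence-free $\bw$ that cannot be written as $\ba\times\bb$ for \emph{any} continuous vector fields $\ba,\bb$, let alone two gradients. Since the first assertion of Theorem \ref{6:theo:1} makes no decay or smallness assumption on $\bw$, your elliptic solve has nothing to produce in such cases. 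This is precisely why the paper uses \emph{three} curl-carrying orbitals rather than two: it prescribes $\rho_i=\eta_i\rho$ with $\eta_1,\eta_2$ functions of $x^1$ only and $\eta_3$ of $(x^1,x^2)$, and solves the system $\sum_{i=1}^3\bnab\eta_i\times\bnab\widehat{\chi}_i=\bw$ explicitly by quadrature, the consistency of the three component equations being guaranteed by $\bnab\cdot\bw=0$. The count $N\ge4$ comes from three unequal densities plus at least one more; your two-orbital ansatz would suggest $N\ge3$ suffices, which the paper explicitly leaves open.

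A second, independent gap is the orthogonality of the vortex-carrying orbitals. The Harriman--Lieb mechanism (distinct modes $\e^{\ii k f(x^3)}$) yields orthogonality only for orbitals whose densities are \emph{proportional to a common profile}, so that the overlap integral collapses to $\int_0^{2\pi}\e^{\ii(k-l)f}\,\rd f=0$. Your orbitals $\rho_1=\sigma\cos^2\theta$ and $\rho_2=\sigma\sin^2\theta$ have different spatial profiles, and their overlaps with each other and with the remaining $N-2$ orbitals do not vanish by mode counting. The paper resolves this with the smooth Hobby--Rice theorem of Lazarev--Lieb (Theorem \ref{6:lem:1}), applied sequentially through three free one-variable phases $h_3,h_2,h_1$ in a specific order; some such device is indispensable and is absent from your argument. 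Your energy discussion via \eqref{so3023} and the role of the decay hypotheses \eqref{curlbound} is directionally right, but it cannot be completed until the representation and orthogonalization steps are repaired along the lines above.
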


We conjecture that condition (\ref{curlbound}) can be considerably
loosened. 
We have used the notation $\partial_i=\partial/\partial_{x^i}$, 
The proof will be split into several steps. (To
avoid clutter we
will sometimes omit the dependence on $\x$ from now on,
when the meaning is
clear. Recall that $\x= (x^1,x^2,x^3)$ and do not confuse $x^2$ with
$|\x|^2$.)

\underline{\bf Step 1} {(\it Construct the $\rho_i$):} We do this  in such a
way that all $\rho_i$ for $i\ge 4$ are equal, while the $\rho_i$ for $1\le i\le 4$ are
different. The motivation for this is that in the case where the
velocity field 
is not curl free, we cannot choose all $\rho_i$ to be equal to $\rho/N$.
Indeed, such an Ansatz would give
$$
\bv(\x) =\frac{1}{\rho(\x)}\bj(\x)=\bnab\frac{1}{N}\sum_{i=1}^N\chi_i(\x)
$$
by \eqref{so301}, which shows that ${\rm curl}\,\bv=0$, and which is a
contradiction. However, we may and will choose $N-3$ of them to
be equal. 
Set
$$
\xi(x)=\frac{1}{m}\int_{-\infty}^x\frac{1}{(1+y^2)^{(1+\delta)/2}}dy
$$
with 
$$
m=\int_{-\infty}^\infty\frac{1}{(1+y^2)^{(1+\delta)/2}}dy
$$
$\xi(x)$ is a continuous, strictly increasing function in $x$ with $\xi(-\infty)=0$ and $\xi(\infty)=1$.
$\delta$ is the $\delta$ in \eqref{curlbound} if the curl $\bw$ of $\v$ satisfies the bound 
\eqref{curlbound} and is arbitrary $>0$ otherwise. 
Set 
$\rho_i  = \eta_i \rho$ with
\begin{align}\label{cons1}
\eta_1(\x)&=\frac{2}{N}\xi(x^1+\alpha)\\\nonumber
\eta_2(\x)&=\frac{2}{N-1}\xi(x^1+\beta)(1-\eta_1(\x))\\\nonumber
\eta_3(\x)&=\frac{2}{N-2}\xi(x^2+\gamma)(1-\eta_1(\x)-\eta_2(\x))\\\nonumber
\eta_i(\x)&=\frac{1}{N-3}(1-\eta_1(\x)-\eta_2(\x)-\eta_3(\x)),\quad 4\le i\le N.
\end{align}
$\alpha,\beta,\gamma$ are real and, for the moment, arbitrary. 

Observe
that  $\eta_1$ and $\eta_2$ are functions of the first component
$x^1$ of $\x$ only, while the $\eta_j$ for $j\ge 3$ depend on $x^1$ and
$x^2$
but not on $x^3$. We claim $0\le 1-\eta_1(\x)-\eta_2(\x)$ and 
$0\le 1-\eta_1(\x)-\eta_2(\x)-\eta_3(\x)$ hold and thus $0\le \eta_j$ for all $1\le j\le N$. 
Indeed, an easy calculation gives
\begin{align*}
1-\eta_1(\x)-\eta_2(\x)&=(1-\eta_1(\x))\left(1-\frac{2}{N-1}\xi(x^1+\beta)\right)\\
1-\eta_1(\x)-\eta_2(\x)-\eta_3(\x)&=(1-\eta_1(\x))\left(1-\frac{2}{N-1}\xi(x^1+\beta)\right)
\left(1-\frac{2}{N-2}\xi(x^2+\gamma)\right)
\end{align*}
and this combined with 
\begin{align*}
 1-\frac{2}{N-1}\xi(x^1+\beta)\ge \frac{N-3}{N-1}\\
1-\frac{2}{N-2}\xi(x^2+\gamma)\ge \frac{N-4}{N-2}
\end{align*}
proves the claim.

As a consequence 
\begin{align}\label{cons4}
0&\le \rho_1\le \frac{2}{N}\rho,\quad
0\le \rho_2 \le\frac{2}{N-1}(\rho-\rho_1),\quad
0\le \rho_3\le \frac{2}{N-2}(\rho-\rho_1-\rho_2),\\\nonumber
0&\le\rho_i= 
\frac{1}{N-3}(\rho-\rho_1-\rho_2-\rho_3),\quad4\le i\le N, \quad\quad\quad
\sum_{i=1}^N\rho_i=\rho.
\end{align}

To fix $\alpha$, consider the function 
$$
I(\alpha)=\int_{\R^3}\rho_1(\x)\rd\x
=\frac{2}{N}\int_{\R^3}\xi(x^1+\alpha)\rho(\x)\rd\x,
$$
which is continuous and monotonically strictly increasing in
$\alpha$ (since $\xi$ has these properties). 
Since $\lim_{\alpha \to -\infty}I(\alpha) =0$ and
$\lim_{\alpha \to +\infty}I(\alpha)=2$, these properties imply that
there is a unique $\alpha$ such that $I(\alpha)=1$. 
We choose this value of $\alpha$ since it implies that 
$\int_{\R^3}\rho_1 d\x =1$, as required.
Having thus fixed $\alpha$, by the same argument and using the fact
that $\int_{\R^3}(\rho-\rho_1)d\x=N-1$, we can
fix $\beta$ uniquely such that also $\int_{\R^3}\rho_2=1$
is valid. Similarly, we can fix $\gamma$ such that also 
$\int_{\R^3}\rho_3=1$ is valid.
But then we also have that, for $4\le i\le N $,
\begin{align}\label{cons5}
\int_{\x\in\R^3}\rho_i(\x)\rd\x&=
\frac{1}{N-3}\int_{\R^3}(\rho(\x)-\rho_1(\x)-
\rho_2(\x)-\rho_3(\x))\rd\x =1.
\end{align}

This completes the construction of all one-body
densities 
$\rho_i$, $1\leq i \leq N$.

\medskip
\underline{\bf Step 2} {\it (Construction of phase functions
$\chi_i$ satisfying \eqref{so301})} We
postpone the
implementation of
the orthogonality to the remaining Steps 3-5.
Given the $\rho_i$ and $\eta_i$ constructed in the previous step, 
equation \eqref{so301} takes the equivalent form
\begin{equation}\label{chi01}
\sum_{i=1}^3\eta_i \bnab\chi_i
+\eta_4\bnab\left(\sum_{i=4}^N\chi_i\right)=\v.
\end{equation}
Recall that we assumed $\v=\j/\rho$ to be well defined though $\rho$ may have 
zero's or even vanish in a region. 
As already mentioned in the Introduction the best way to avoid such problems is to 
assume $\rho$ and $\v$ to be given rather than $\rho$ and $\j$. The current 
$\j$ is then {\it defined} to equal $\rho\v$.

We introduce  
\begin{equation}\label{chi1}
\tau(\x)=\frac{1}{N-3}\sum_{i=4}^N \chi_i(\x).
\end{equation}
Then, with the auxiliary quantities
\begin{equation}\label{chi2}
\widehat{\chi}_i=\chi_i-\tau, \qquad i=1,2,3 \ ,
\end{equation}
equation \eqref{chi01} is equivalent to
\begin{equation}\label{cons7}
\bnab\tau=
\v-\sum_{i=1}^3\eta_i \bnab\,\widehat{\chi}_i,
\end{equation}
which in particular says that the r.h.s. has to be curl free.
The strategy for determining the phase factors is as follows. We will first 
determine the necessary form of the $\widehat{\chi}_k$ 
that makes the right hand side of \eqref{cons7} curl free. Equation 
\eqref{cons7} then defines $\tau$ up to an uninteresting additive
constant.
In Step 3 the $\chi_i$ for $4\le i\le N$ will be determined in such a
way 
that they
satisfy \eqref{chi1} and such that the resulting wave functions 
$\phi_i=\rho_i^{1/2}\exp \ii \chi_i, 4\le i\le N$ are orthogonal. 
For this we will follow the 
strategy used in the proof of Theorem \ref{4:prop:1}. Finally in Step 4
we will 
determine the $\widehat{\chi}_i$ and hence the $\chi_i, 1\le i\le 3$ via
\eqref{chi2} such that all $\phi_i, \  1\le i\le N$  are pairwise
orthogonal.

To implement these steps, we first take the curl of \eqref{cons7} and 
obtain the curl-freeness condition:
\begin{equation}\label{cons8}
\sum_{i=1}^3\bnab\,\eta_i\times 
\bnab\,\widehat{\chi}_i=\bw ={\rm curl}\  \bv,
\end{equation}
Using \eqref{cons1} we can write out
\eqref{cons8} in components:
\begin{align}\label{cons10}
\partial_2\eta_3\; \partial_3\widehat{\chi}_3&=w_1\\\nonumber
-\partial_1\eta_1\;\partial_3\widehat{\chi}_1
-\partial_1\eta_2\;\partial_3\widehat{\chi}_2
-\partial_1\eta_3\;\partial_3\widehat{\chi}_3&=w_2\\\nonumber
\partial_1\eta_1\;\partial_2\widehat{\chi}_1
+\partial_1\eta_2\;\partial_2\widehat{\chi}_2
+\partial_1\eta_3\;\partial_2\widehat{\chi}_3
-\partial_2\eta_3\;\partial_1\widehat{\chi}_3&=w_3.
\end{align}
Recall that $\eta_1$ and $\eta_{2}$ depend on $x^1$ only, while
$\eta_3$ depends on $x^1$ and $x^2$. As a consequence no partial
derivatives of the form $\partial_1\widehat{\chi}_1$ or 
$\partial_1\widehat{\chi}_2$ appear in 
these equations.
As  preparation for the next step  we calculate some of the partial
derivatives of the $\eta$'s. 
The inequalities 
\begin{align}\label{cons9}
\partial_1 \eta_1(\x)&=\frac{2}{mN(1+(x^1+\alpha)^2)^{(1+\delta)/2}}>0\\\nonumber
\partial_2 \eta_3(\x)&
=\frac{2}{m(N-2)(1+(x^2+\gamma)^2)^{(1+\delta)/2}}(1-\eta_1(\x)-\eta_2(\x))>0
\end{align}
are valid due to
\begin{equation}\label{cons91}
 1-\eta_1(\x)-\eta_2(\x)>\frac{1}{6},
\end{equation}
an easy consequence of the definitions \eqref{cons1} of $\eta_1$ and $\eta_2$. 
In particular $\partial_1\eta_1$ and $\partial_2 \eta_3$ never vanish. 

Let $h_1,h_2,h_3$ be arbitrary functions of $x^1$ only.
Define 
\begin{align}\label{phi1}
 \kappa_1(\x)&=\sum_{j=1}^3\kappa_{1,j}(\x)\\\nonumber
\kappa_2(\x)&=0\\\nonumber
\kappa_3(\x)&=\int_0^{x^3}\left(\frac{w_1}{\partial_2\eta_3}\right)(x^1,x^2,
s)\rd s
\end{align}
with 
\begin{align}\label{phi10}
 \kappa_{1,1}(\x)&=\;\;\; \frac{1}{\partial_1\eta_1(x^1)}
\int_0^{x^2}w_3(x^1,s,x^3=0)\rd s\\\nonumber&\qquad\qquad 
-\frac{1}{\partial_1\eta_1(x^1)}\int_0^{x^3}\left(w_2+\frac{
\partial_1\eta_3}
{\partial_2\eta_3}\,w_1\right)(x^1,x^2,t)\rd t\\\nonumber
\kappa_{1,2}(\x)&=\;\;\;\frac{\partial_1 h_3(x^1)}{\partial_1\eta_1(x^1)}\eta_3(x^1,x^2)
\\\nonumber
\kappa_{1,3}(\x)&=-\frac{1}{\partial_1\eta_1(x^1)}
((\partial_1\eta_2)\;h_2)(x^1). 
\end{align}
In terms of these quantities the functions $\widehat{\chi}_i$ are defined as
\begin{equation}\label{phi2}
 \widehat{\chi}_i=\kappa_i+h_i,\qquad i=1,2,3.
\end{equation}
Define 
\begin{equation}\label{phi3}
 \u=\v-\sum_{i=1}^3\eta_i\bnab\; \widehat{\chi}_i.
\end{equation}
We have the 
\begin{lemma}\label{lem:u}
$\u$ is curl free for arbitrary $h_1,h_2,h_3$.
\end{lemma}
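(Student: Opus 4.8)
The plan is to compute $\bnab\times\u$ directly and reduce the claim to the component identities already recorded in \eqref{cons10}. Since $\bnab\times(\eta_i\bnab\widehat\chi_i)=\bnab\eta_i\times\bnab\widehat\chi_i+\eta_i\,\bnab\times\bnab\widehat\chi_i$ and the curl of a gradient vanishes, \eqref{phi3} gives
$$\bnab\times\u=\bw-\sum_{i=1}^3\bnab\eta_i\times\bnab\widehat\chi_i.$$
Thus the lemma is exactly the assertion \eqref{cons8}, and it suffices to verify its three scalar components \eqref{cons10} for the functions $\widehat\chi_i=\kappa_i+h_i$ defined in \eqref{phi1}--\eqref{phi2}. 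Note first that each $h_i$ is a function of $x^1$ alone, so $\partial_2 h_i=\partial_3 h_i=0$; since $\eta_1,\eta_2$ depend only on $x^1$ and $\eta_3$ only on $x^1,x^2$, and since (as remarked after \eqref{cons10}) no derivatives $\partial_1\widehat\chi_1$ or $\partial_1\widehat\chi_2$ occur, the functions $h_1,h_2$ drop out of \eqref{cons10} outright. The only derivatives through which $h_3$ survives are $\partial_1\widehat\chi_3$ (via $\partial_1 h_3$) and the term $\kappa_{1,2}=(\partial_1 h_3/\partial_1\eta_1)\eta_3$ inside $\widehat\chi_1$. This is where I expect the phrase ``for arbitrary $h_1,h_2,h_3$'' to be established: in the third equation the contribution $\partial_1\eta_1\,\partial_2\kappa_{1,2}=(\partial_1 h_3)(\partial_2\eta_3)$ cancels exactly against the $h_3$-part of $-\partial_2\eta_3\,\partial_1\widehat\chi_3=-(\partial_2\eta_3)(\partial_1 h_3)$.

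Next I would check the first two equations, which are essentially immediate. By the fundamental theorem of calculus applied to $\kappa_3$ in \eqref{phi1} one has $\partial_3\widehat\chi_3=w_1/\partial_2\eta_3$, so the first equation $\partial_2\eta_3\,\partial_3\widehat\chi_3=w_1$ holds; here $\partial_2\eta_3\neq0$ by \eqref{cons9}. For the second equation, $\widehat\chi_2=h_2$ gives $\partial_3\widehat\chi_2=0$, and differentiating $\kappa_{1,1}$ in \eqref{phi10} in $x^3$ yields $\partial_3\widehat\chi_1=-(1/\partial_1\eta_1)\bigl(w_2+(\partial_1\eta_3/\partial_2\eta_3)w_1\bigr)$; substituting these together with $\partial_3\widehat\chi_3=w_1/\partial_2\eta_3$ into the second line of \eqref{cons10} produces $w_2$ after the two $(\partial_1\eta_3/\partial_2\eta_3)w_1$ terms cancel.

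The substantive step is the third equation, and I expect it to be the main obstacle. After discarding the $h_3$-terms as above, I would differentiate $\kappa_{1,1}$, $\kappa_{1,2}$ and $\kappa_3$ under the integral sign and collect the remaining contributions $\partial_1\eta_1\,\partial_2\kappa_{1,1}$, $\partial_1\eta_3\,\partial_2\kappa_3$ and $-\partial_2\eta_3\,\partial_1\kappa_3$. Writing $a=\partial_1\eta_3$, $b=\partial_2\eta_3$ and using the Clairaut symmetry $\partial_2 a=\partial_1 b$ (both equal $\partial_1\partial_2\eta_3$), a short algebraic simplification of the integrand shows that all terms proportional to $w_1$ and to $\partial_2 w_1$ cancel, leaving
$$\partial_1\eta_1\,\partial_2\widehat\chi_1+\partial_1\eta_3\,\partial_2\widehat\chi_3-\partial_2\eta_3\,\partial_1\widehat\chi_3=w_3(x^1,x^2,0)-\int_0^{x^3}\bigl(\partial_1 w_1+\partial_2 w_2\bigr)\,\rd t.$$
Finally I would invoke the fact that $\bw=\bnab\times\v$ is divergence free, so $\partial_1 w_1+\partial_2 w_2=-\partial_3 w_3$; the remaining integral then telescopes by the fundamental theorem of calculus to $w_3(x^1,x^2,x^3)-w_3(x^1,x^2,0)$, and the boundary term cancels to give exactly $w_3$. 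The only delicate points are justifying differentiation under the integral sign (controlled by the decay hypotheses \eqref{curlbound} on $\bw$ and its derivatives) and the essential use of $\bnab\cdot\bw=0$, which is precisely what forces the particular form chosen for $\kappa_{1,1}$ and $\kappa_3$.
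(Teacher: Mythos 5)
Your proof is correct and rests on the same ingredients as the paper's own argument: reduction of the curl condition to the component equations \eqref{cons10}, the Clairaut symmetry $\partial_2\partial_1\eta_3=\partial_1\partial_2\eta_3$, and the divergence-freeness of $\bw$ to telescope the $x^3$-integral, with the $h_3$-contributions cancelling between $\kappa_{1,2}$ and $\partial_1\widehat{\chi}_3$. The only difference is organizational --- the paper solves the first equation for $\widehat{\chi}_3$, recasts the remaining two as a two-dimensional gradient problem for $\partial_1\eta_1\widehat{\chi}_1+\partial_1\eta_2\widehat{\chi}_2$ in $(x^2,x^3)$ and checks the integrability condition $\partial_2\widehat{w}_2=\partial_3\widehat{w}_3$, whereas you verify all three equations directly from the explicit formulas; your intermediate reduction of the third equation to $w_3(x^1,x^2,0)-\int_0^{x^3}\bigl(\partial_1 w_1+\partial_2 w_2\bigr)\,\rd t$ is exactly right.
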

By what has been said so far, it suffices to check 
that \eqref{cons10} is satisfied. We give the proof in Appendix \ref{app:1}. 
It is somewhat intricate and uses 
the fact that $\bw$ has zero divergence.
By this lemma $\u$ is a gradient field and we 
{\it define} $\tau$ to be the solution
to the equation $\bnab \tau=\u$. $\tau$ is unique up to a constant
and is therefore fixed uniquely by requiring it to vanish at the origin.

To sum up: We have determined $\widehat{\chi}_1,\widehat{\chi}_2,\widehat{\chi}_3$ 
and $\tau$ such that \eqref{cons7} holds.
Finally we set 
\begin{equation}\label{phi4}
\chi_i= \widehat{\chi}_i+\tau,\qquad i=1,2,3.
\end{equation}
Observe that by \eqref{phi1} and \eqref{phi10} all components of the curl $\bw$ of $\bv$ enter the definition 
of these three phase functions.

\medskip
\underline{\bf Step 3} {\it (Orthogonality for $4\leq i \leq N$).}
We construct suitable phase functions  $\widehat{\chi}_i,\,4\le i\le N$ 
to achieve  
the orthogonality of the corresponding $N-3$ one-body wave functions
$\widehat{\phi}_i$
\begin{equation}\label{step36}
\widehat{\phi}_i(x)=\rho_i(x)^{1/2}\e^{\ii\,\widehat{\chi}_i(x)},\qquad 4\le i\le N.
\end{equation}
To achieve this we refer to our discussion 
in Section \ref{sec:curlfree}. Set 
\begin{equation}\label{step31}
\widehat{\rho}=(\rho-\rho_1-\rho_2-\rho_3)=(N-3)\rho_4
\end{equation}
such that $\int\widehat{\rho}(\x)\;d\x=(N-3)$,
which puts us in a $(N-3)$-body context by which we may invoke the
discussion 
of Section 
\ref{sec:curlfree}. 
Indeed, the associated $N-3$-body current is 
\begin{equation}\label{jhat}
\widehat{\j}
=\sum_{i=4}^N\rho_i\bnab\,\chi_i=
\frac{\widehat{\rho}}{N-3}\sum_{i=4}^N\bnab\,\chi_i=
\widehat{\rho}\,\bnab\,\tau,
\end{equation}
and so the associated velocity field $1/\widehat{\rho}\;\;\widehat{\j}$ is a gradient field
equal to $\u$ by the construction of $\tau$, see the end of Step 3.

With 
\begin{equation}\label{step33}
\widehat{f}(x^3)=\frac{2\pi}{N-3}\int_{-\infty}^{+\infty}\int_{-\infty}^{+\infty}
\int_{-\infty}^{x^3}\;\widehat{\rho}(s,t,u)\,\rd s \rd t  \rd
u
\end{equation}
and for $4\le i\le N$ we adjust the notation in \eqref{so16}-\eqref{so17} 
to the present situation and set 
\begin{equation}\label{step34}
\widehat{\chi}_i(\x)=\left(i-4 -\frac{N-4}{2}\right)\widehat{f}(x^3), \quad 4\le i\le N.
\end{equation}
Observe that when $i$ runs through $4,5,\cdots N$, then $i-4 -(N-4)/2$ runs through the set $\cK_{N-3}$, see \eqref{so15}.
By the arguments given in the proof of Proposition \ref{4:prop:1}, the functions \eqref{step36}
form an orthonormal system of $N-3$ vectors. 

\underline{\bf Step 4} {\it (Orthogonality for $i=1,2,3$).}
Here we extend the orthonormal system \eqref{step36}
with the help of suitably chosen phases 
$\widehat{\chi}_1,\widehat{\chi}_2,\widehat{\chi}_3$ 
and wave functions 
\begin{equation*}\label{step361}
\widehat{\phi}_k(\x)=\rho_k^{1/2}(\x)\e^{\ii\,\widehat{\chi}_k(\x)},\qquad 1\le k\le 3,
\end{equation*}
to an orthonormal system $\widehat{\phi}_i$ of
$N$ vectors. This in turn means we have to find
suitable functions $h_1,h_2$ and $h_3$ as introduced in Step 2.
With this {\it Ansatz} 
the scalar products, which we have to make vanish, can be written as
\begin{align}\label{step41}
\langle \widehat{\phi}_k,\widehat{\phi}_i\rangle&=\int_{\R^3}
\e^{-\ii (\kappa_k+h_k-\widehat{\chi}_i)}\,
\eta_k^{1/2}\,\eta_i^{1/2}\, \rho\, \rd\x \\\nonumber
\langle \widehat{\phi}_k,\widehat{\phi}_l\rangle&=\int_{\R^3}
\e^{-\ii(\kappa_k+h_k-\kappa_l-h_l)}\,
\eta_k^{1/2}\, \eta_l^{1/2}\, \rho\, \rd\x 
\end{align}
for $1\le l,k \le 3,l<k, 4\le i\le N$.
We invoke the following theorem in \cite{Lazarev}.
\begin{theorem}\label{6:lem:1}
Let $m\ge 1$ functions $\psi_j\in L^1(\R^n), 1\le j\le m$ 
be given. Then there exists a real, infinitely differentiable 
function $\chi(\x)$ on $\R^n$, with bounded derivatives, such that 
\begin{equation}\label{step410}
\int_{\R^n} \e^{-\ii\chi(\x)}\psi_j(\x)\rd\x=0
\end{equation}
holds for all $1\le j\le m$.
\end{theorem}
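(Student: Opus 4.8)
The plan is to view the data as a single complex vector measure and to split the task into an \emph{exactness} part and a \emph{smoothness} part. Write $\boldsymbol\psi=(\psi_1,\dots,\psi_m)$ and introduce the $\C^m$-valued measure $\mu(A)=\int_A\boldsymbol\psi(\x)\,\rd\x$ on $\R^n$, which is non-atomic since it is absolutely continuous. The goal is a smooth real $\chi$ with bounded derivatives and $\int_{\R^n}\e^{-\ii\chi}\,\rd\mu=0\in\C^m$. I would begin with a reduction: if the $\psi_j$ are $\C$-linearly dependent, say $\sum_j\bar\lambda_j\psi_j\equiv 0$, then $\sum_j\bar\lambda_j\int\e^{-\ii\chi}\psi_j=0$ for \emph{every} $\chi$, so that constraint is automatic. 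Discarding such relations, I may assume $\psi_1,\dots,\psi_m$ are linearly independent in $L^1(\R^n)$.

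Next I would observe that exactness is, in principle, free of charge. By Lyapunov's convexity theorem applied to the non-atomic real vector measure $(\Ret\psi_j,\Imt\psi_j)_{j}$ valued in $\R^{2m}$, there is a measurable $A\subset\R^n$ with $\mu(A)=\tfrac12\mu(\R^n)$; the two-valued phase $\chi=0$ on $A$ and $\chi=\pi$ on $A^c$ then gives $\int\e^{-\ii\chi}\,\rd\mu=\mu(A)-\mu(A^c)=0$ exactly. This shows that the only real obstruction is regularity: a measurable solution always exists but is discontinuous, so the work lies entirely in upgrading to a smooth $\chi$.

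For that I would use an approximate-plus-correction scheme. First construct a smooth $\chi_0$ with bounded derivatives for which $F_0:=\int\e^{-\ii\chi_0}\,\rd\mu$ is small: one smooths the two-valued phase above (or subdivides $\R^n$ into paired regions of nearly equal $\mu$-mass carrying opposite smoothed phases), so that the surviving integral comes only from thin transition layers and a small residual region. Then I would restore exactness by a finite-dimensional correction: fix smooth, compactly supported real functions $b_1,\dots,b_{2m}$ and set $\chi=\chi_0+\sum_r a_r b_r$, defining $\Phi(a)=\int\e^{-\ii\chi}\,\rd\mu\in\C^m\cong\R^{2m}$. Choosing the $b_r$ localized where $\e^{-\ii\chi_0}\boldsymbol\psi$ is non-degenerate, one arranges the differential $D\Phi(0)$, whose columns are $-\ii\int b_r\,\e^{-\ii\chi_0}\,\rd\mu$, to be invertible onto $\R^{2m}$; this is exactly where linear independence of the $\psi_j$ (plus a generic choice of $\chi_0$) enters, to exclude a nonzero real functional annihilating $\e^{-\ii\chi_0}\boldsymbol\psi$ pointwise. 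A quantitative inverse function theorem then produces $a$ with $\Phi(a)=0$, and the resulting $\chi$ has bounded derivatives because transitions occur over layers of bounded width and the correction is smooth and compactly supported.

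The hard part will be reconciling the two halves quantitatively. The same construction that drives $F_0$ to $0$ tends to shrink the quantities $-\ii\int b_r\,\e^{-\ii\chi_0}\,\rd\mu$ governing invertibility, so one must balance the smoothing error $|F_0|$ against the conditioning of $D\Phi(0)$ and a second-order bound on $\Phi$, \emph{uniformly}. I expect the crux to be engineering a region of small but positive $\mu$-mass on which $\chi_0$ is essentially constant and placing the control functions $b_r$ there, which decouples the scale of $|F_0|$ from that of the Jacobian and lets the inverse function theorem close; the independence reduction and the genericity of $\chi_0$ are the supporting inputs, while Lyapunov's theorem merely certifies that an exact (if irregular) solution is always available.
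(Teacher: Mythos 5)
First, a point of comparison: the paper does not prove Theorem \ref{6:lem:1} at all --- it is imported from \cite{Lazarev}, whose proof (like Pincus's proof \cite{Pincus} of the Hobby--Rice theorem that it generalizes) is topological: one builds an \emph{odd} continuous map from a sphere of smooth phase profiles into $\C^m\cong\R^{2m}$ and invokes the Borsuk--Ulam theorem to force a zero. Your route is therefore genuinely different, and the question is whether it closes as a standalone argument. It does not, and the gap sits exactly where you flag it. Your reduction to $\C$-linearly independent $\psi_j$ and your Lyapunov step are fine (the two-valued phase with $\mu(A)=\tfrac12\mu(\R^n)$ is an exact discontinuous solution, and approximating $A$ by a finite union of cubes and smoothing over thin layers makes $|F_0|$ arbitrarily small, by absolute continuity of $\mu$). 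The unproved step is the surjectivity of $D\Phi(0)$. Since the $b_r$ range over real functions supported in a region $U$, surjectivity fails precisely when there is $c\in\C^m\setminus\{0\}$ with $\Imt\bigl(\e^{-\ii\chi_0(\x)}\sum_j\overline{c_j}\,\psi_j(\x)\bigr)=0$ for a.e.\ $\x\in U$, and ``linear independence plus a generic choice of $\chi_0$'' is an assertion, not an argument: you must exclude \emph{all} directions $c$ on the compact sphere $S^{2m-1}$ simultaneously, by a perturbation of $\chi_0$ that does not destroy the smallness of $F_0$, and you supply no mechanism for this. This non-degeneracy is exactly what the Borsuk--Ulam argument is designed to bypass: an odd map is forced to hit $0$ without any invertible linearization.

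Second, even granting surjectivity, the quantitative closure is explicitly deferred (``I expect the crux to be\dots''). The tension you identify is real: localizing the $b_r$ on a set of $|\mu|$-mass $\epsilon$ caps the Jacobian entries at $O(\epsilon)$, so the inverse function theorem needs $|F_0|\ll\epsilon$ together with a \emph{uniform lower} bound on the smallest singular value of $D\Phi(0)$ and an upper bound on $D^2\Phi$; the former again rests on the unproved non-degeneracy, now in quantitative form. Until both items are supplied the proposal is a plausible programme rather than a proof. A minor further point: the paper subsequently relies on two extra features of the Lazarev--Lieb $\chi$ --- it depends on a single coordinate and its derivative vanishes outside a bounded set (this is what makes the kinetic-energy estimates in Step 4 and Appendix B work) --- so a completed version of your construction would need to be carried out in the one-variable reduction and arranged to deliver these properties as well.
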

The $\chi(\x)$ constructed in \cite{Lazarev} is a function of one 
variable only (which may be taken to be any one of the $x^i$ that one
wishes)  and vanishes outside a bounded set in that variable. Consequently,
$\chi(\x)$ has bounded derivatives.
This implies that if the $\psi_j$ have finite kinetic energy (i.e., $\bnab
\psi \in L^2(\R^n)$) then
the functions $e^{-\ii\chi(\x)}\psi_j(\x)$ also have finite kinetic energy.
Unfortunately, the theorem in \cite{Lazarev} or the one in
\cite{Rutherfoord} does not tell us how large the kinetic energies of the 
$e^{-\ii\chi(\x)}\psi_j(\x)$ functions are, only that they are finite. 

Theorem \ref{6:lem:1} is a generalization of the Hobby-Rice theorem
\cite{Hobby-Rice}, see also \cite{Pincus},
according to which a piecewise constant $\chi(x)$ (equal to $0$ or $\pi$
everywhere) exists with the property stated in Theorem \ref{6:lem:1}. Such a
$\chi$ would necessarily lead to infinite kinetic energy (because of the
discontinuities) 
and would not be suitable for us. Theorem \ref{6:lem:1} tells us how to smooth out
the discontinuities, and is essential for us.

Theorem \ref{6:lem:1} can be used to orthogonalize any set of any $N$ functions, 
$f_1, \cdots, f_N$. 
It says that one can add a phase to $f_2 $ so that $f_1 $ and $f_2$ are orthogonal.
Then one can add a phase to $f_3$ so that $f_3$ is orthogonal to $f_1$ and $f_2$. 
Finally, one can make $f_N$ orthogonal to $f_1, \cdots, f_{N-1}$.

In our case we have to proceed cautiously. We will use the three undetermined functions
$h_1,\, h_2,  \,h_3$ as phases, but the astute reader will notice that our functions already
depend explicitly on $h_3$ and $h_2$ and might complain about lack of independence. In fact, 
only  $\psi_1$ depends on $h_3$ and $h_2$. Thus, no problem arises if we do things in 
the right order: First we determine $h_3$ to make $\psi_3$ orthogonal to $\psi_i$ for 
$i\geq 4$. This fixes $h_3 $. Then we fix $h_2$ similarly. Now $\psi_1$ is fixed and we 
are free to choose $h_1$ to complete the orthogonalization. The order is important!

We first consider the case $k=3$ in the first relation in \eqref{step41}. The aim is 
to find a suitable function $h_3$ depending on $x^1$ 
only. When we set 
$$
\psi_i^{(3)}(x^1)=\int_{(x^2,x^3)\in\R^2}
\e^{-\ii(\kappa_3(\x)-\widehat{\chi}_i(\x))}\eta_3
^{1/2}(\x)\,\eta_i^{1/2}(\x)\,\rho(\x)\rd x^2\rd x^3, 
\qquad 4\le i\le N,
$$
an element of $L^1(\R)$, we obtain
$$
\langle \widehat{\phi}_3,\widehat{\phi}_i\rangle
=\int_{x^1\in\R}\e^{-\ii h_3(x^1)}\psi_i^{(3)}(x^1)\rd x^1,\qquad 4\le i\le
N.
$$
By the previous lemma we can find a continuously differentiable function $h_3$ such 
that all these expressions vanish. This choice of $h_3$ determines $\widehat{\chi}_3$.
We turn to the case $k=2$ and introduce the following 
functions in $L^1(\R)$
\begin{align*}
\psi_3^{(2)}(x^1)&=\int_{(x^2,x^3)\in\R^2}
\e^{-\ii(\kappa_2(\x)-\kappa_3(\x)-h_3(x^1))}
\eta_2^{1/2}(\x)\eta_3^{1/2}(\x)\,\rho(\x)\rd x^2 \rd x^3 \\
\psi_i^{(2)}(x^1)&=\int_{(x^2,x^3)\in\R^2}
\e^{-\ii(\kappa_2(\x)-\widehat{\chi}_i(\x))}\eta_2^{1/2}(\x)\eta_i^{1/2}(\x)
\,\rho(\x)\rd x^2\rd x^3,
\qquad \qquad 4\le i\le N,
\end{align*}
such that
\begin{equation*}
\langle \widehat{\phi}_2,\widehat{\phi}_i\rangle
=\int_{\R}\e^{-\ii h_2(x^1)}\psi_i^{(2)}(x^1)\rd x^1,
\qquad\qquad\qquad\qquad 3\le i\le N.
\end{equation*}
Again by the lemma there is a continuously differentiable function $h_2$ in the 
variable $x^1$ such that all these expressions vanish. This choice of $h_2$ determines $\widehat{\chi}_2$.
Finally we turn to the case $k=1$. 
Set 
\begin{align*}
\psi_2^{(1)}(x^1)&=\int_{(x^2,x^3)\in\R^2}
\e^{-\ii(\kappa_1(\x)-\widehat{\chi}_2(\x)}
\eta_1(\x)^{1/2}\eta_2(\x)^{1/2} \,\rho(\x)dx^2dx^3, \\
\psi_3^{(1)}(x^1)&=\int_{(x^2,x^3)\in\R^2}
\e^{-\ii(\kappa_1(\x)-\widehat{\chi}_3(\x)}
\eta_1(\x)^{1/2}\eta_3(\x)^{1/2}
\,\rho(\x)\rd x^2\rd x^3, \\
\psi_i^{(1)}(x^1)&=\int_{(x^2,x^3)\in\R^2}
\e^{-\ii(\kappa_1(\x)-\widehat{\chi}_i(x))}\eta_1^{1/2}(\x)\,\eta_i^{1/2}(\x)\,
\rho(\x)\rd x^2\rd x^3, 
\quad 4\le i\le N,
\end{align*}
which again are elements of $L^1(\R)$.
Observe that $\kappa_1$ is known since $h_3$ and $h_2$ have been determined, see \eqref{phi1} and \eqref{phi10}.
By construction
\begin{equation*}
\langle \widehat{\phi}_1,\widehat{\phi}_i\rangle
=\int_{\R}\e^{-\ii h_1(x^1)}\psi_i^{(1)}(x^1)\rd x^1,
\qquad 2\le i\le N,
\end{equation*}
holds. We use the lemma a final time to find a function $h_1$ such that all these 
expressions vanish. 

To sum up, the $\widehat{\phi}_i$ for all $1\le i\le N$ form 
an orthonormal system. Then
$$
\phi_i(\x)=\widehat{\phi}_i(\x)\e^{\ii \tau(\x)}=\rho_i^{1/2}(\x)\e^{\ii\chi_i(\x)},
\qquad 1\le i\le N.
$$
are also orthonormal.  By construction
\eqref{so301} holds, and the proof of the first part of 
Theorem \ref{6:theo:1} is finished. 
It remains to prove $T(\psi)<\infty$ when \eqref{curlbound} holds. For this we take recourse to \eqref{so3023}.
Since
$$
\bnab\,\sqrt{\rho_i}=\bnab\,\sqrt{\eta_i\rho}=(\bnab\,\sqrt{\eta_i})\sqrt{\rho}+
\eta_i\bnab\,\sqrt{\rho}
$$
holds, by using the definition \eqref{cons1} one easily checks that 
each $\bnab\,\sqrt{\eta_i}$ is bounded. Since $\sqrt{\rho}, |\bnab\,\sqrt{\rho}|\in L^2(\R^3)$
we conclude $|\bnab\,\sqrt{\rho_i}|\in L^2(\R^3)$, that is $E(\rho_i)<\infty$ for all $i$.
To estimate $E(\j_i,\rho_i)$ we proceed as follows. Since $\chi_i=\tau+\widehat{\chi}_i$ and 
$0\le \rho_i\le \rho$
\begin{equation}\label{tpsibound}
 \int_{\R^3} \rho_i |\bnab\,\chi_i|^2\,\rd\x\le 2\int_{\R^3} \rho |\bnab\,\tau|^2\rd\x
+2\int_{\R^3} \rho|\bnab\,\widehat{\chi}_i|^2\rd\x.
\end{equation}
First we consider the case $i\ge 4$. 
Then the second term on the r.h.s. is finite by choice of $\widehat{\chi}_i$ and 
the discussion in Section \ref{sec:curlfree}. By the definition of $\tau$ and relation 
\eqref{phi3}
\begin{equation}
 |\bnab\,\tau|^2\le 8|\v|^2+8\sum_{i=1}^3|\bnab\, \widehat{\chi}_i|^2.
\end{equation}
Since 
$$
\int_{\R^3} \rho|\v|^2\rd\x=E(\j,\rho)<\infty 
$$
by assumption, we are done if we can show that
\begin{equation}\label{tpsibound1}
\int_{\R^3} \rho |\bnab\, \widehat{\chi}_i|^2\rd\x <\infty
\end{equation}
holds for all $i=1,2,3$. Then incidentally the r.h.s. of \eqref{tpsibound} is finite for 
all $1\le i\le N$. By \eqref{phi2}
\begin{equation}\label{tpsibound2}
\int_{\R^3} \rho |\bnab\, \widehat{\chi}_i|^2\rd\x\le 2\int_{\R^3} \rho |\bnab\, \kappa_i|^2\rd\x+
2\int_{\R^3} \rho |\bnab\,h_i|^2\rd\x,\qquad i=1,2,3.
\end{equation}
The second term on the r.h.s. is finite by the choice of the $h_i$, Theorem \ref{6:lem:1} and the comment thereafter.
As for the first term, the case $i=2$ is trivial since $\kappa_2=0$. 
In the appendix \ref{app:2} we will prove 
\begin{lemma}\label{lem:kappai}
 The functions $|\bnab\, \kappa_i|$ for $i=1,3$ are bounded.
\end{lemma}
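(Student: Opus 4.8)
The plan is to bound each first-order partial derivative of $\kappa_1=\kappa_{1,1}+\kappa_{1,2}+\kappa_{1,3}$ and of $\kappa_3$ separately, starting from the explicit formulas \eqref{phi1}, \eqref{phi10}. The single organizing principle is that the only two reciprocals appearing, $1/\partial_1\eta_1$ and $1/\partial_2\eta_3$, grow merely polynomially: by \eqref{cons9},
\begin{equation*}
\frac{1}{\partial_1\eta_1}=\frac{mN}{2}\bigl(1+(x^1+\alpha)^2\bigr)^{(1+\delta)/2},\qquad \frac{1}{\partial_2\eta_3}=\frac{m(N-2)}{2}\,\frac{\bigl(1+(x^2+\gamma)^2\bigr)^{(1+\delta)/2}}{1-\eta_1-\eta_2},
\end{equation*}
and $1-\eta_1-\eta_2>1/6$ by \eqref{cons91}. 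Thus $1/\partial_1\eta_1$ grows like $(1+(x^1)^2)^{(1+\delta)/2}$ and $1/\partial_2\eta_3$ like $(1+(x^2)^2)^{(1+\delta)/2}$, the shifts $\alpha,\gamma$ altering these only by bounded factors. These are exactly the growths that the decay of $w_j$ prescribed by \eqref{curlbound} cancels, and this matching is the whole point of the lemma.

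First I would dispose of $\kappa_3$. Its $x^3$-derivative is the integrand $w_1/\partial_2\eta_3$ itself, whose $x^2$-growth from $1/\partial_2\eta_3$ is killed by the $x^2$-decay of $w_1$, leaving a bounded function. The $x^1$- and $x^2$-derivatives are obtained by differentiating under the integral sign, justified by the second line of \eqref{curlbound}; there $\partial_i w_1$ and the derivatives of $1/\partial_2\eta_3$ (which bring in $\partial_i\eta_1,\partial_i\eta_2$, each decaying like $(1+(x^1)^2)^{-(1+\delta)/2}$) combine into an integrand that decays like an integrable power of the integration variable $s$, namely $(1+s^2)^{-(1+\delta)/2}$, while decaying fast enough in $x^1$ and $x^2$ to leave no uncompensated prefactor. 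The integrals converge precisely because $1+\delta>1$.

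The term $\kappa_{1,1}$ is treated identically: the prefactor $1/\partial_1\eta_1\sim(1+(x^1)^2)^{(1+\delta)/2}$ multiplies $\int_0^{x^2}w_3\,\rd s$ and $\int_0^{x^3}\bigl(w_2+\tfrac{\partial_1\eta_3}{\partial_2\eta_3}w_1\bigr)\,\rd t$, each of which decays like $(1+(x^1)^2)^{-(1+\delta)/2}$ once the $s$- resp.\ $t$-integration is performed, so the product is bounded. Here one must observe that the ratio $\partial_1\eta_3/\partial_2\eta_3$ grows like $(1+(x^2)^2)^{(1+\delta)/2}$, but it multiplies $w_1$, whose $x^2$-decay absorbs this growth. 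Upon differentiation the worst new factor is $\partial_1(1/\partial_1\eta_1)$, which grows only like $(1+(x^1)^2)^{\delta/2}$ and is therefore still dominated by the $(1+(x^1)^2)^{-(1+\delta)/2}$ decay of the (differentiated) integrals; the $x^2$- and $x^3$-derivatives that hit an integration limit are just the integrand evaluated there, again bounded by the cancellations already noted.

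Finally, $\kappa_{1,2}=\tfrac{\partial_1 h_3}{\partial_1\eta_1}\eta_3$ and $\kappa_{1,3}=-\tfrac{\partial_1\eta_2}{\partial_1\eta_1}h_2$ are of a different character and rely on the structure of the phases from Theorem \ref{6:lem:1}: $h_2,h_3$ vanish outside a bounded $x^1$-interval and have bounded derivatives. Hence, although $1/\partial_1\eta_1$ grows, the factors $\partial_1 h_3$ and $h_2$ (and the $\partial_1 h_2,\partial_1^2 h_3$ produced by differentiation) confine everything to a compact $x^1$-set where all factors are continuous, while $\partial_1\eta_2/\partial_1\eta_1$ is globally bounded since $(1+(x^1+\alpha)^2)^{(1+\delta)/2}\partial_1\eta_1$ is constant and $(1+(x^1+\alpha)^2)^{(1+\delta)/2}\partial_1\eta_2$ is bounded; as $\kappa_{1,2},\kappa_{1,3}$ are $x^3$-independent and $\kappa_{1,3}$ is a function of $x^1$ alone, only a handful of elementary derivatives need checking. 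I expect the main obstacle to be bookkeeping rather than anything conceptual: one must verify systematically that every occurrence of the growing reciprocals $1/\partial_1\eta_1$, $1/\partial_2\eta_3$ and of $\partial_1\eta_3/\partial_2\eta_3$ is paired with a $w_j$ or $\partial_i w_j$ whose decay in the matching variable dominates, and that differentiating the reciprocals never produces growth beyond order $\delta/2$, which the exponent $(1+\delta)/2$ in \eqref{curlbound} always absorbs — this is exactly why that hypothesis is imposed.
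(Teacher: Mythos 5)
Your proposal is correct and follows essentially the same route as the paper's Appendix B: differentiate the explicit formulas for $\kappa_3$ and $\kappa_{1,1}$ under the integral sign, match the polynomial growth of $1/\partial_1\eta_1$, $1/\partial_2\eta_3$ and $\partial_1\eta_3/\partial_2\eta_3$ (controlled via \eqref{cons9} and \eqref{cons91}) against the decay imposed by \eqref{curlbound}, and handle $\kappa_{1,2}$, $\kappa_{1,3}$ using the boundedness and eventual vanishing of $h_2$, $h_3$ and their derivatives together with the global boundedness of ratios such as $\partial_1\eta_2/\partial_1\eta_1$. The only difference is one of detail: the paper writes out each term ($A_1$, $A_2$, $B_1$, $B_2$, etc.) with explicit constants and the shift-absorbing bound \eqref{eb2}, whereas you state the cancellation of exponents in general terms, which is the same argument.
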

Given this lemma the first integral on the r.h.s. of \eqref{tpsibound2} is also finite thus completing the 
proof of theorem \ref{6:theo:1}.

\hfill \qedsymbol

\section{\bf The case $N=2,\bnab\times\bv\neq 0$}
\label{sec:N=2}
In this section we discuss the case of two particles, $N=2$. Surprisingly, 
we have not been able to provide conditions that are both necessary and
sufficient for a solution of the problem to exist. Of course curl freeness
of the velocity field is sufficient but not 
necessary as the first example shows. Conversely the second example provides
a (non curl free) velocity field, for which there is no solution.

\vspace{0.3cm}
\subsection{Solution to an example with
$N=2,$  {\bf and} $\bnab \times \bv\neq 0$.}~\\
Let $\bc \neq 0$ be a fixed vector and consider 
\begin{align}\label{so19}
\rho(\x)&=2 \pi^{-3/2}\:\e^{-|\x|^2}\\\nonumber
{\bj}(\x)&=\pi^{3/2}({\bc}\times {\x})\:\e^{-|\x|^2}
\end{align}
with resulting velocity field 
\begin{equation}\label{so20}
 {\bv}(\x)=\tfrac{1}{2}({\bc}\times {\x})
\end{equation}
which is not curl free.
The normalization $\int_{\R^3}\rho(\x)\rd \x =2$ holds, $\bnab \cdot \bj
(\x) =0$, 
and both $E(\rho)$ and
$E(\bj;\rho) $ are finite. 
We will consider the case where $\bc=(0,0,1)$, a general $\bc$ may be discussed similarly.

Assume there are 
$\rho_1> 0 ,\rho_2>0,\chi_1$ and $\chi_2$, with $\rho_1+\rho_2=\rho $, which
are solutions to the equation 
\begin{equation}\label{so21}
\rho_1 \bnab\, \chi_1+
\rho_2\bnab\, \chi_2=\bj
\end{equation}\
subject to the condition
\begin{equation}\label{so211}
\int_{\R^3}\rho_j(\x)^{1/2}\rho_k(\x)^{1/2}\e^{\ii(\chi_k(\x)-\chi_j(\x))}\rd \x
=\delta_{jk}.
\end{equation}
Introduce $\eta_k(\x)=\rho_k(\x)/\rho(\x)$ 
which satisfy $0\le \eta_k \le 1,\;\eta_1+\eta_2=1$. So we may 
rewrite \eqref{so21} as 
\begin{equation}\label{so212}
\eta_1\bnab\, \chi_1
+\eta_2 {\bnab}\,\chi_2=\bv,
\end{equation}
or equivalently 
\begin{equation}\label{so2121}
 \eta_1\bnab\,(\chi_1- \chi_2)+\bnab \chi_2=\bv,
\end{equation}
and \eqref{so211} as
\begin{equation}\label{so2122}
\int_{\R^3}\eta_j(x)^{1/2}\eta_k(\x)^{1/2}\e^{\ii(\chi_k(\x)-\chi_\bj(\x))}\rho(\x)\rd \x=\delta_{jk}.
\end{equation}
We take the curl of \eqref{so212} and use $\bnab\,\eta_2(\x)=
-\bnab\, \eta_1(\x)$, a consequence of the
relation 
$\eta_1(\x)+\eta_2(\x)=1$. 
This gives
\begin{equation}\label{so214}
\bnab\, \eta_1(\x)\times 
\bnab\, \widehat{\chi}(\x)
=\bnab  \times\v(\x)={\bc}
\end{equation}
with $\widehat{\chi}=\chi_1-\chi_2$ and valid for all $\x\in\R^3$. 
As a consequence of \eqref{so214} the vector fields
$\bnab\, \eta_1(\x)$ and 
$\bnab\, \widehat{\chi}(\x)$ are never parallel 
and in particular never vanishing. 
In addition we conclude that they are orthogonal to ${\bc}$.

We define 
\begin{equation}\label{def1}
\eta_1(\x)=\tfrac{1}{2}(1+\tanh x^1), \quad \widehat{\chi}(\x)=
2x^2\cosh ^2x^1 +h(x^1)
\end{equation}
where for the moment $h$ is an arbitrary function of $x^1$ alone. 
In particular
\begin{equation}\label{def2}
\eta_2(\x)=\tfrac{1}{2}(1-\tanh x^1)
\end{equation}
and $0\le\eta_j\le 1$ is satisfied. 
Also by construction
$$\bnab\times(\eta_1
\bnab\, \widehat{\chi}-\v)=\bc,
$$
{\it i.e.} \eqref{so214} is satisfied. But this implies there is a solution 
$\chi_2$ to \eqref{so2121}. More explicitly
\begin{align}\label{def3}
\chi_2(\x)&= -x^2 \left((1+\tanh x^1)\cosh^2 x^1-x^1\right)\\\nonumber
&\qquad-\tfrac{1}{2}\int_0^{x^1}(1+\tanh y)\frac{d}{dy}h(y)\rd y+const.,
\end{align}
where  $h(y)$ is undetermined as yet. $\chi_1$ is of course given as
$\widehat{\chi}-\chi_2$. 
Moreover, since $\tanh$ is odd, 
\begin{equation}\label{def4}
\int_{\R^3}\eta_1\,\rho\, \rd \x=\int_{\R^3}\eta_2\,\rho\, \rd \x
=\tfrac{1}{2}\int_{\R^3}\rho\, \rd x=1.
\end{equation}
Thus \eqref{so211} is satisfied for $j=k=1,2$ for any choice of $h$.
To determine $h$, we inspect the remaining condition $(j=2,k=1)$ in \eqref{so211}, which 
we write in the form
\begin{align}\label{def5}
\int_{R^3}\eta_1\eta_2
\rho\,\e^{\ii\widehat{\chi}}\,\rd \x&=\frac{1}{2\pi^{3/2}}
\int _{\R^3}\sqrt{1-\tanh^2 x^1}\;
\e^{-(x^1)^2-(x^2)^2-(x^3)^2}\e^{\ii(2 x^2\cosh ^2 x^1 +h(x^1))}\rd \x\\\nonumber
&=0.
\end{align}
Set 
\begin{align*}
g(x^1)&=\frac{1}{2\pi^{3/2}}
\sqrt{1-\tanh^2 x^1}\;\e^{-(x^1)^2}\int_{(x^2,x^3)\in \R^2}\e^{-(x^2)^2-(x^3)^2}\,
\e^{\ii 2 x^2\cosh ^2 x^1 }\,\rd x^2 \rd x^3\\
&=\frac{1}{2\pi^{1/2}}\sqrt{1-\tanh^2 x^1}\;\e^{-(x^1)^2}\e^{-\cosh^4 x^1},
\end{align*}
which is integrable and positive for all $x^1$.
Condition \eqref{def5} takes the form
\begin{equation}\label{def6}
\int_{-\infty}^{\infty}g(x^1)\e^{\ii h(x^1)}\rd x^1=0.
\end{equation}
Set 
$$
a=\int_{-\infty}^{+\infty}g(y)\rd y>0.
$$
Then the choice
\begin{equation}\label{def7}
h(x^1)=\frac{2\pi}{a}\int_{-\infty}^{x^1} g(y) \rd y
\end{equation}
with $h(-\infty)=0,h(\infty)=2\pi$ gives
$$
\int_{-\infty}^{\infty}g(x^1)\e^{\ii h(x^1)}\rd x^1
=\frac{a}{2\pi\ii }\int_{-\infty}^{\infty}\frac{\rd}{\rd x^1}\e^{\ii
h(x^1)}\rd x^1
=\frac{a}{2\pi\ii }\left(\e^{\ii h(\infty)}-\e^{\ii h(-\infty)}\right)=0.
$$
By inserting this solution for $h$ into \eqref{def3}, all quantities are
determined. We claim it gives a solution $\psi$ for which $T(\psi)<\infty$. 
For the proof we use the identity \eqref{so3023}. First
$E(\rho_1)<\infty$ and $E(\rho_2)<\infty$ is 
an easy consequence of $E(\rho)<\infty$ and the choice of $\eta_1$ and $\eta_2$. 
An easy calulation gives the bound
$$
|\bnab\, \chi_2(\x)|\le 2|x^2\sinh2x^1|+2\cosh^2x^1+|x^1|+\left|\frac{\partial h(x^1)}{\partial x^1}\right|.
$$
But 
$$
\frac{\partial h(x^1)}{\partial x^1}=\frac{2\pi}{a}g(x^1)
$$
decreases strongly as $x^1\rightarrow \pm\infty$. Since $\rho_2<\rho$, and
thanks to the Gaussian form 
of $\rho$, we therefore obtain
$$
E(\j_2;\rho_2)=\int_{\R^3}\rho_2|\bnab\, \chi_2|^2\,\rd\x<\infty.
$$
As for $E(\j_1;\rho_1)$ we use $|\bnab\,\chi_1|\le |\bnab\chi_2|+|\bnab\,\widehat{\chi}|$ combined with the 
estimate
$$
|\bnab\,\widehat{\chi}|\le |x^2\sinh 2x^1|+\left|\frac{\partial h(x^1)}{\partial x^1}\right|+2\cosh^2 x^1,
$$
a consequence of the definition \eqref{def1} of $\widehat{\chi}$. So we may use the same arguments 
as for the proof of $E(\j_2;\rho_2)<\infty$ to conclude $E(\j_1;\rho_1)<\infty$. By \eqref{so3023} 
this proves the 
claim $T(\psi)<\infty$.


\vspace{0.3cm}
\subsection{No solution to an example with
$N=2,  \bnab\times {\bv}\neq 0$.}~\\
For the two-body case ($N=2$) we will provide an example
with $\bnab\times\v\neq 0$, 
for which there is no continuously differentiable solution to the problem. 
This example originated out
of discussions with Th. Br\"ocker \cite{Broecker0}.
Another, older example is by Taut, Machon and Eschrig \cite{Taut}. 
\begin{example}[N=2] \label{5:ex:2}
Consider the choice 
\begin{align}\label{counter1}
\rho(\x)&=\frac{2}{\pi^{3/2}}\:\e^{-\x^2}\\\nonumber
\j(\x)&=\frac{\pi^{3/2}}{2}(0,-2x^1\,x^3,-x^1\,x^2)\:\e^{-x^2}
\end{align}
with resulting velocity field 
\begin{equation}\label{counter2}
 \v(\x)=(0,-2x^1\,x^3,-x^1\,x^2).
\end{equation}
Clearly $E(\rho)<\infty, E(\j,\rho)<\infty$.
\end{example}
\begin{proposition}[N=2]\label{5:prop:2}
There exists no solution to the problem with continuously 
differentiable $\rho_k/\rho$ and $\chi_k,\: k=1,2$, when $\rho$ and 
$\j$ are of the form \eqref{counter1}.
\end{proposition}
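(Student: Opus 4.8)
The plan is to argue by contradiction. Suppose continuously differentiable $\rho_k/\rho$ and single-valued, continuously differentiable phases $\chi_k$ ($k=1,2$) solve the problem for the data \eqref{counter1}. Writing $\eta_k=\rho_k/\rho$ with $\eta_1+\eta_2=1$, the current equation \eqref{so21} becomes \eqref{so212}, and taking its curl (using $\bnab\eta_2=-\bnab\eta_1$, exactly as in the passage leading to \eqref{so214}) eliminates the pure gradients and gives
$$
\bnab\,\eta_1\times\bnab\,\widehat{\chi}=\bnab\times\v=\bw,\qquad \widehat{\chi}:=\chi_1-\chi_2.
$$
For the velocity field \eqref{counter2} a direct computation gives $\bw=(x^1,x^2,-2x^3)$. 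The decisive structural feature, and the reason this example has no solution, is that the third component of $\bw$ vanishes on the plane $P=\{x^3=0\}$, so that on $P$ the curl is tangent to $P$ and purely radial: $\bw|_{P}=(x^1,x^2,0)$.

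First I would restrict everything to $P$. Since $\bw=\bnab\eta_1\times\bnab\widehat{\chi}$ is automatically orthogonal to $\bnab\eta_1$, we have $\bnab\eta_1\cdot\bw\equiv 0$ on all of $\R^3$. Evaluating this on $P$, the term $w_3\,\partial_3\eta_1$ disappears and what remains is the Euler relation $x^1\partial_1\eta_1+x^2\partial_2\eta_1=0$ for the restriction $\widetilde{\eta}(x^1,x^2):=\eta_1(x^1,x^2,0)$. In polar coordinates this is $r\,\partial_r\widetilde{\eta}=0$, so $\widetilde{\eta}$ is constant along every ray from the origin; continuity of $\widetilde{\eta}$ at the origin then forces all these ray-values to agree, whence $\widetilde{\eta}$ is globally constant on $P$. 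Consequently the in-plane gradient of $\eta_1$ vanishes on $P$, i.e.\ $\bnab\eta_1|_{P}=(0,0,a)$ with $a:=\partial_3\eta_1(x^1,x^2,0)$.

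Next I would feed this back into the curl equation restricted to $P$. Because $\bw|_P=(x^1,x^2,0)$ is nonzero off the origin, $\bnab\eta_1$ cannot vanish there, so $a\neq 0$ on $P\setminus\{0\}$; expanding $(0,0,a)\times\bnab\widehat{\chi}=(x^1,x^2,0)$ yields the two in-plane identities $\partial_1\widehat{\chi}=x^2/a$ and $\partial_2\widehat{\chi}=-x^1/a$. Finally I would integrate $\bnab\widehat{\chi}$ around a circle $\gamma_R\subset P$ of radius $R>0$ centered at the origin. Only the in-plane components contribute (the curve has $\rd x^3=0$), and with $x^1=R\cos\theta$, $x^2=R\sin\theta$ one computes
$$
\oint_{\gamma_R}\bnab\widehat{\chi}\cdot\rd\mathbf{l}=-R^2\int_0^{2\pi}\frac{\rd\theta}{a(R\cos\theta,R\sin\theta,0)}.
$$
Since $a$ is continuous and nowhere zero on the connected circle, it has constant sign there, so $1/a$ does too and the integral is strictly nonzero. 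This contradicts the fact that $\widehat{\chi}=\chi_1-\chi_2$, being a single-valued $C^1$ function, must satisfy $\oint_{\gamma_R}\bnab\widehat{\chi}\cdot\rd\mathbf{l}=0$. The contradiction proves the proposition.

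The step I expect to be the crux is the middle one: showing that $\eta_1$ is forced to be constant on the plane $x^3=0$. This rests entirely on the special algebra of $\bw$ for this example --- the vanishing of $w_3$ on $P$ together with the radial form of $(w_1,w_2)$ there --- which converts the pointwise orthogonality $\bnab\eta_1\cdot\bw=0$ into a degree-zero homogeneity statement, and on using continuity at the origin to promote ``constant on rays'' to ``globally constant.'' As a reassuring check, I would note that the argument invokes neither the positivity $\rho_k>0$ nor the orthonormality \eqref{so211}; it uses only the current equation, the constraint $\eta_1+\eta_2=1$, the regularity hypotheses, and single-valuedness of the phases, so the obstruction is genuinely the monodromy of $\widehat{\chi}$ and cannot be removed by adjusting the splitting of $\rho$.
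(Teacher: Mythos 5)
Your proof is correct, but it takes a genuinely different route from the paper's. The paper reduces the statement to a stronger lemma of Br\"ocker: with $h=\tfrac12((x^1)^2+(x^2)^2-2(x^3)^2)$ there are \emph{no continuous} vector fields $\ba,\bb$ on $\R^3$ with $\ba\times\bb=\bnab h$ at all. This is proved topologically: any such $\ba$ must be orthogonal to $\bnab h$ wherever the latter is nonzero, and the auxiliary field $\u=(a^1,a^2,-2a^3)$ is then a nowhere-vanishing tangent vector field on every sphere centered at the origin, contradicting the Hairy Ball Theorem. You instead exploit the specific geometry on the plane $x^3=0$: the orthogonality $\bnab\eta_1\cdot\bw=0$ together with the radial form of $\bw$ there gives an Euler relation forcing $\eta_1$ to be constant on that plane, after which the curl equation pins down the in-plane components of $\bnab\widehat{\chi}$ explicitly and yields a manifestly nonzero circulation around circles, contradicting the single-valuedness of the phases required by \eqref{so1}. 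The trade-off: the paper's lemma is stronger and more robust (it excludes any continuous factorization, without assuming $\bnab\widehat{\chi}$ is exact or that $\eta_1$ is $C^1$ rather than merely continuous), while your argument is elementary and quantitative, avoiding algebraic topology entirely, at the price of using two extra pieces of structure --- the $C^1$ regularity of $\rho_k/\rho$ and the single-valuedness of $\chi_1-\chi_2$ --- both of which are, however, explicitly among the hypotheses of the proposition. Your closing observation that the obstruction is the monodromy of $\widehat{\chi}$ is a nice complement to the paper's topological viewpoint.
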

\begin{remark}\label{5:re:1}
We have not been able to show that there is no solution 
$\rho_1,\j_1,\rho_2,\j_2$ to the problem, when the solution is only required 
to satisfy 
$T(\psi)=E(\rho_1)+E(\j_1,\rho_1,)+E(\rho_2)+E(\j_2,\rho_2)<\infty$  
and which means less smoothness for $\rho_{1(2)},\j_{1(2)}$ and $\v_{1(2)}$.
\end{remark}
\begin{proof} 
Introduce the harmonic function on $\R^3$
\begin{equation}\label{counter3}
h(x^1,x^2,x^3)=\tfrac{1}{2}\left((x^1)^2+(x^2)^2-2(x^3)^2\right).
\end{equation}
An easy calculation shows that the curl of $\v$ equals the
gradient of $h$,
\begin{equation}\label{counter4}
\bnab\times\v(\x)=
\bnab\,h(\x)=(x^1,x^2,-2x^3).
\end{equation}
With the notation and discussion in the previous subsection, in particular 
in connection with the first relation in \eqref{so214},
we have to look for solutions
$\eta_1$ and $\widehat{\chi}$ to the relation
\begin{equation}\label{counter5}
\bnab\,\eta_1
\times\bnab\,\widehat{\chi}
=\bnab\,h.
\end{equation}
But now we claim there are no solutions to \eqref{counter5}. 
Indeed, there is even a stronger result due to Th. Br\"ocker \cite{Broecker0}, 
which reads as follows.
\begin{lemma} \label{5:lem:1}
Given the function $h$ \eqref{counter3}, there are no continuous
vector fields 
$\ba$ and $\bb$ on $\R^3$ with 
\begin{equation}\label{counter6}
\ba\times \bb=\bnab\,h.
\end{equation}
\end{lemma}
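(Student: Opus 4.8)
The plan is to turn the algebraic identity $\ba\times\bb=\bnab h$ into a statement about vector fields on the $2$-sphere and then contradict the hairy-ball theorem. First I would compute $\bnab h(\x)=(x^1,x^2,-2x^3)=A\x$ with $A=\diag(1,1,-2)$, which is invertible and vanishes only at $\x=0$. Hence on $\R^3\setminus\{0\}$ we have $\ba\times\bb=\bnab h\neq 0$, which forces $\ba$ and $\bb$ to be linearly independent there; in particular both are nowhere vanishing off the origin. Restricting everything to the unit sphere $S^2\subset\R^3$ (which avoids the origin) gives continuous maps $\ba,\bb:S^2\to\R^3$ that are pointwise linearly independent.

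The key observation is that a cross product is orthogonal to each of its factors: $\ba(p)\cdot\bnab h(p)=\ba(p)\cdot(\ba(p)\times\bb(p))=0$ for every $p\in S^2$. Set $\mathbf n(p)=\bnab h(p)/|\bnab h(p)|$ and $W(p)=\ba(p)/|\ba(p)|$, both continuous maps $S^2\to S^2$. Then $W(p)\perp\mathbf n(p)$, i.e. $W(p)\in T_{\mathbf n(p)}S^2$. Because $A$ is invertible, the Gauss map $\mathbf n$, which is $\x\mapsto A\x/|A\x|$, is a homeomorphism of $S^2$ onto itself. I would therefore define a tangent vector field on $S^2$ by $s(q)=W(\mathbf n^{-1}(q))\in T_qS^2$; it is continuous, since $\mathbf n^{-1}$ is continuous, and nowhere zero, since $|W|\equiv 1$. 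This contradicts the hairy-ball theorem (Poincar\'e--Hopf for $S^2$), so no such $\ba,\bb$ can exist.

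I expect the only genuinely delicate point to be the topological input, and it is worth isolating cleanly. The single feature of $h$ that matters is that $\bnab h$ is a nonsingular linear field, so that $\mathbf n$ has nonzero degree (here $\deg\mathbf n=\sign\det A=-1$); any isolated zero of nonzero index would do, and harmonicity of $h$ plays no role. Two care points: one must use that continuity alone (not smoothness) suffices for both the homeomorphism property of $\mathbf n$ and the hairy-ball theorem, and one must verify that $\mathbf n^{-1}$ is continuous---immediate here because $S^2$ is compact Hausdorff and $\mathbf n$ is a continuous bijection. An equivalent and perhaps more conceptual route avoids inverting $\mathbf n$: the plane bundle $p\mapsto \mathbf n(p)^\perp$ over $S^2$ is isomorphic to $\mathbf n^{*}TS^2$, so its Euler number is $\deg(\mathbf n)\,\chi(S^2)=\pm 2\neq 0$; since $\ba$ restricts to a nowhere-zero section of this bundle, that Euler number would have to vanish, again a contradiction. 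Either way the obstruction is exactly $\chi(S^2)\neq 0$.
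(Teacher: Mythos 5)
Your proof is correct and rests on essentially the same idea as the paper's: both exploit the orthogonality $\ba\perp\bnab h$ away from the origin to manufacture a continuous, nowhere-vanishing tangent vector field on $S^2$, contradicting the hairy ball theorem. The only difference is in the bookkeeping: where you transport $\ba$ to the tangent space at $q$ by composing with the inverse of the normalized map $\x\mapsto A\x/|A\x|$, the paper applies the symmetric matrix $A=\diag(1,1,-2)$ to $\ba$ itself, so that $(A\ba)\cdot\x=\ba\cdot(A\x)=0$ exhibits $A\ba$ as tangent to the round sphere directly, with no inversion needed.
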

\begin{proof}
Assume to the contrary that there are 
solutions $\ba$ and $\bb$ to \eqref{counter6}.
Note that the vector field $\bnab\,h(\x)$ is non-vanishing 
for $\x\neq 0$. Hence the vector fields $\ba$ and $\bb$ necessarily 
share the same property and in addition we must have 
\begin{equation}\label{counter7}
\ba(\x)\perp\bnab\,h(\x),\qquad \x\neq 0 
\end{equation}
(and similarly for $\bb(\x)$). Condition \eqref{counter7} written out for 
$\ba(\x)=(a^1(\x),a^2(\x),a^3(\x))$ is 
\begin{equation}\label{counter8}
a^1(\x) x^1+a^2(\x) x^2-2a^3(\x)x^3=0.
\end{equation}
Introduce the vector field 
\begin{equation}\label{counter9}
\u(\x)=(a^1(\x),a^2(\x),-2a^3(\x))
\end{equation}
which, by the discussion just made,  is non-vanishing for $\x\neq 0$. By
\eqref{counter8} 
it is orthogonal to the radius vector $(x^1,x^2,x^3)$ . 
Hence it is tangential to any sphere centered at the origin 
and non-vanishing everywhere there. 
But this contradicts the {\it Hairy Ball Theorem} of Brouwer \cite{Brouwer}. 
For modern proofs of this theorem see {\it e.g.} \cite{Dold}, IV, 4.4, 
\cite{Spanier}, 
Chap.4, Sec.7, Corr.11. A proof using simple analytic tools is given in 
\cite{Broecker},
 VI, 2.4.
\end{proof}  
The proposition is now a direct consequence of this lemma and the preceding 
discussion.
\end{proof}
There is an easier direct proof, that there are no continuously 
differentiable solutions $\ba(\x)$ and $\bb(\x)$ to \eqref{counter6},  which
uses
a slightly stronger condition.
Indeed, make a Taylor expansion and write $\ba(\x)=
\widehat\ba+A\x+o(|\ x|^2)$ and similarly 
$\bb(\x)=\widehat\bb+B\x+o(|\ x|^2)$, where $A$ and $B$ are $3\times 3$
matrices. Also let $T=\diag(1,1,-2)$, whence
$T\x=\bnab\,h(\x)$. But then the condition 
\eqref{counter6} first says $\ba\times \bb=0$ and 
$\widehat\ba\times B\x-\widehat\bb\times A\x=T\x$. The first condition says 
that $\widehat\ba$ and $\widehat\bb$ are parallel. 
Now the case $\widehat\ba=\widehat\bb=0$ can be excluded immediately and so
we may assume 
that at least one vector is non-vanishing, say $\widehat\ba\neq 0$, and
that 
$\widehat\bb=\lambda \widehat\ba$. But with $\y=T\x$ this leads to the
relation
$$
\widehat\ba\times(-\lambda A+ B)T^{-1}\y=\y,
$$
valid for all small $\y$ and hence, by linearity, for all $\y$. In 
particular this means that $\widehat\ba$ is orthogonal to all $\y$, 
$\widehat\ba\perp \y$, which is a contradiction.


\begin{appendix}

\section{Proof of Lemma \ref{lem:u}}\label{app:1}~~\\
With $\kappa_3$ as given in \eqref{phi1}, $\widehat{\chi}_3=\kappa_3+h_3$ obviously solves the first relation in 
\eqref{cons10}. Inserting this into
the two other equations in \eqref{cons10} gives the equations
\begin{align}\label{1:app:1}
\partial_3(\partial_1\eta_1\;\widehat{\chi}_1+\partial_1\eta_2\;\widehat{\chi}_2)&=
\widehat{w}_2\\\nonumber
\partial_2(\partial_1\eta_1\;\widehat{\chi}_1
+\partial_1\eta_2\;\widehat{\chi}_2)&=\widehat{w}_3
\end{align}
with 
\begin{align}\label{1:app:2}
\widehat{w}_2(\x)&=-w_2(\x)-\frac{\partial_1\eta_3(x^1,x^2)}{\partial_2\eta_3(x^1,x^2)}\,w_1(\x)\\\nonumber
\widehat{w}_3(\x)&=\:w_3(\x) -\partial_1\eta_3(x^1,x^2)
\int_0^{x^3}\left(\partial_2\frac{w_1}{\partial_2\eta_3}\right)(x^1,x^2,s)\rd s
\\\nonumber
&\qquad+\partial_2\eta_3(x^1,x^2)
\int_0^{x^3}
\left(\partial_1\frac{w_1}{\partial_2\eta_3}\right)(x^1,x^2,s)\rd s
+\partial_2\eta_3(x^1,x^2)\partial_1 h_3(x^1).
\end{align}
Using the fact that $\bw$ has vanishing  divergence by its very definition,
a short calculation 
shows that the following necessary and sufficient condition for solving \eqref{1:app:1}
\begin{equation}\label{1:app:3}
\partial_2\widehat{w}_2(\x)=\partial_3\widehat{w}_3(\x)
\end{equation}
is valid for any choice of $h_3(x^1)$. So since \eqref{1:app:3} holds for each $x^1$, 
$(\widehat{w}_3,\widehat{w}_2)$ is a two-dimensional gradient field. In other words there exists $\widehat{w}$ 
such that $(\widehat{w}_3,\widehat{w}_2)=(\partial_2\widehat{w},\partial_3\widehat{w})$ holds. $\widehat{w}$ 
can be obtained by integrating this vector field, for example from $(x^1,0,0)$ - with arbitrary initial value 
$\widehat{h}_1(x^1)$ - to $(x^1,x^2,0)$ and 
from there to $(x^1,x^2,x^3)$. 
Thus
\begin{equation}\label{1:app:4}
 \widehat{w}(\x)=\int_0^{x^2}\widehat{w}_3(x^1,s,x^3=0)\rd s
+\int_0^{x^3}\widehat{w}_2(x^1,x^2,t)\rd t+\widehat{h}_1(x^1).
\end{equation}
So with our choice \eqref{phi1},\eqref{phi10} and \eqref{phi2} 
for $\widehat{\chi}_1,\widehat{\chi}_2,\widehat{\chi}_3$
and the choice $\widehat{h}_1(x^1)=\partial_1\eta_1(x^1) h_1(x^1)$ the relation 
$$
\partial_1\eta_1\;\widehat{\chi}_1+\partial_1\eta_2\;\widehat{\chi}_2=\widehat{w}
$$
is satisfied. Observe that 
$$
\widehat{w}_3(x^1,x^2,0)=w_3(x^1,x^2,0)+\partial_2\eta_3(x^1,x^2)\partial_1 h_3(x^1)
$$
holds. Therefore also relation \eqref{cons10} is valid and the proof of Lemma \ref{lem:u} is complete.

\section{Proof of Lemma  \ref{lem:kappai}}\label{app:2}
\renewcommand{\theequation}{\mbox{\Alph{section}.\arabic{equation}}}
\setcounter{equation}{0} 
We start with estimates for $\bnab\, \kappa_3$. By \eqref{phi1}
\begin{align}\label{1:app}
\partial_1\,\kappa_3(\x)&=-\frac{\partial_1\partial_2\eta_3(x^1,x^2)}{(\partial_2\eta_3(x^1,x^2))^2}
\int_0^{x^3}w_1(x^1,x^2,s)\rd s\\\nonumber
&\quad\quad+\frac{1}{\partial_2\eta_3(x^1,x^2)}\int_0^{x^3}\partial_1 w_1(x^1,x^2,s)\rd s\\\nonumber
\partial_2\,\kappa_3(\x)&=-\frac{\partial_2^2\eta_3(x^1,x^2)}{(\partial_2\eta_3(x^1,x^2))^2}
\int_0^{x^3}w_1(x^1,x^2,s)\rd s\\\nonumber
&\quad\quad+\frac{1}{\partial_2\eta_3(x^1,x^2)}\int_0^{x^3}\partial_2 w_1(x^1,x^2,s)\rd s
\\\nonumber
\partial_3\,\kappa_3(\x)&= \frac{1}{\partial_2 \eta_3(x^1,x^2)}w_1(\x).
\end{align}
To see that all $|\partial_j\,\kappa_3(\x)|,\;j=1,2,3$ are bounded, we proceed as follows.
Let $W$ stand for any of the quantities $w_1,\partial_1w_1,\partial_2w_1$. By the assumption 
\eqref{curlbound} there exists a constant $0<C_1<\infty$ such that 
\begin{equation}\label{estW0}
|W(\x)|\le C_1\left(1+(x^1)^2\right)^{-(1+\delta)/2}\left(1+(x^2)^2\right)^{-(1+\delta)/2}
\left(1+(x^3)^2\right)^{-(1+\delta)/2}
\end{equation}
holds. Therefore there is another constant $0<C_2<\infty$, such that the bound
\begin{align}\label{estW}
\Big|\int_0^{x^3}W(x^1,x^2,s)ds\Big|&\le \int_{-\infty}^{\infty}|W(x^1,x^2,s)|ds\\\nonumber
&\le C_2 \left(1+(x^1)^2\right)^{-(1+\delta)/2}\left(1+(x^2)^2\right)^{-(1+\delta)/2}
\end{align}
is valid. With this preparation we start with an estimate for the first term contributing to
$\partial_1\,\kappa_3$, which we call $A_1$. Now by \eqref{cons91}
\begin{align}\label{estW1}
\Big|\frac{\partial_1\partial_2\eta_3(x^1,x^2)}{(\partial_2\eta_3(x^1,x^2))^2}\Big|
&=\frac{N-2}{N}\frac{
\left(1+(x^1+\alpha)^2\right)^{-(1+\delta)/2}\left(1+(x^2+\gamma)^2\right)^{(1+\delta)/2}}
{(1-\eta_1(\x)-\eta_2(\x))^2}\\\nonumber
&\le C_3\left(1+(x^2+\gamma)^2\right)^{(1+\delta)/2}
\end{align}
for yet another finite constant $C_3$. We have used the relation
$$
\partial_1\partial_2\eta_3(x^1,x^2)=-\frac{4}
{m^2N(N-2)\left(1+(x^1+\alpha)^2\right)^{(1+\delta)/2}\left(1+(x^2+\gamma)^2\right)^{(1+\delta)/2}}
$$
Combining this estimate with the estimate 
\eqref{estW} for the choice $W=w_1$ and with the estimate 
\begin{equation}\label{eb2}
\sup_{x^2\in\R}(1+(x^2+\gamma)^2)^{(1+\delta)/2}\left(1+(x^2)^2\right)^{-(1+\delta)/2}<\infty
\end{equation}
shows that $A_1$ is bounded. 

As for the second contribution to $\partial_1\,\kappa_3$, and which we call $A_2$,
we use the estimate \eqref{estW} for $W=\partial_1w_1$ combined with
\begin{equation}\label{eb1}
\Big|\frac{1}{\partial_2\eta_3(x^1,x^2)}\Big|\le C_4\left(1+(x^2+\gamma)^2\right)^{(1+\delta)/2},
\end{equation}
which follows from \eqref{cons9} and \eqref{cons91}, and the estimate \eqref{eb2}
to conclude that $A_2$ is also bounded.

We turn to an estimate for $\partial_2\,\kappa_3$ and start with the first contribution, 
which we call $B_1$. The relation
$$
\partial_2 ^2\eta_3(\x)
=-\frac{2(1+\delta)(x^2+\gamma)}{m(N-2)\left(1+(x^2+\gamma)^2\right)^{(3+\delta)/2}}
(1-\eta_1(\x)-\eta_2(\x))
$$
gives the estimate
\begin{align}\label{eb3}
\Big|\frac{\partial_2^2\eta_3(x^1,x^2)}{(\partial_2\eta_3(x^1,x^2))^2} \Big|&=
\frac{(1+\delta)m(N-2)}{2}
\frac{|x^2+\gamma|\left(1+(x^2+\gamma)^2\right)^{-(1-\delta)/2}}{(1-\eta_1(\x)-\eta_2(\x))}\\\nonumber
&\le C_5\left(1+(x^2+\gamma)^2\right)^{\delta/2}\\\nonumber
&\le C_5\left(1+(x^2+\gamma)^2\right)^{(1+\delta)/2}.
\end{align}
Again we have used \eqref{cons91} and the trivial bound 
$$
|x^2+\gamma|\left(1+(x^2+\gamma)^2\right)^{-1}\le 1.
$$
We combine this bound with the bound \eqref{estW} for the choice $W=w_1$ and the bound \eqref{eb2}
to conclude that $B_1$ is bounded.

As for the second contribution to $\partial_2\,\kappa_3$ and which we call $B_2$, we proceed 
in analogy to 
the proof of the estimate of $A_2$. That is we use \eqref{eb1} and \eqref{eb2} and \eqref{estW} 
for the choice $W=\partial_2w_1$ to conclude that $B_2$ is bounded.

Finally we use \eqref{estW0} for the choice $W-w_1$ and \eqref{eb1} to conclude that 
$|\partial_3\,\kappa_3|$ is bounded.

As for $\kappa_1$, we start with 
$$
|\kappa_1(\x)|\le \sum_{j=1}^3|\kappa_{1,j}(\x)|.
$$
The boundedness of $|\bnab\,\kappa_{1,1}|$, see \eqref{phi10}, follows similar to one for $|\bnab \kappa_3|$. 
Due to the presence of the factor $\partial_1 h_3(x^1)\;\kappa_{1,2}(\x)$ is smooth, vanishes for all large 
$x^1$ and has bounded derivatives, that is $|\bnab\,\kappa_{1,2}|$ is bounded. By definition of $\kappa_{1,3}$ 
it remains to estimate 
\begin{align}\label{2:app}
\partial_1 \kappa_{1,3}(\x)&=-\left(\partial_1\frac{1}{\partial_1\eta_1}(\x)\right)
((\partial_1\eta_2)\;h_2)(\x)-\frac{1}{\partial_1\eta_1(\x)}
((\partial_1^2\eta_2)\;h_2)(\x)\\\nonumber&\qquad-\frac{1}{\partial_1\eta_1(\x)}
((\partial_1\eta_2)\;\partial_1h_2)(\x),
\end{align}
since $\kappa_{1,3}$ is a function of $x^1$ only.
The relations
\begin{align}\label{3:app}
\partial_1\frac{1}{\partial_1\eta_1}(\x)&=mN(1+\delta)x^1\left(1+(x^1)^2\right)^{(\delta-1)/2}\\\nonumber
\partial_1\eta_2(\x)&=\frac{1}{m(N-1)}\left(1+(x^1+\beta)^2)\right)^{-(1+\delta)/2}(1-\eta_1(\x))\\\nonumber
&\qquad-\frac{1}{mN(N-1)}\xi(x^1+\beta)\left(1+(x^1+\beta)^2)\right)^{-(1+\delta)/2}
\end{align}
show that
$$
\left(\partial_1\frac{1}{\partial_1\eta_1}\right)\partial_1\eta_2
$$
is bounded. Since $h_2$ is bounded this shows that the first term on the r.h.s of \eqref{2:app} is bounded. 
By calculating $\partial_1^2\eta_2$, a similar argument shows that 
$$
\frac{1}{\partial_1\eta_1}\partial_1^2\eta_2 
$$
is bounded, such that the second term on the r.h.s of \eqref{2:app} is also bounded.
The third term is bounded, since $\partial_1h_2(x^1)$ vanishes for all large $x^1$.
In conclusion, we have established that $|\bnab\,\kappa_{1,3}|=|\partial_1\,\kappa_{1,3}|$ is bounded and this 
completes the proof of Lemma \ref{lem:kappai}.
\end{appendix}


\end{document}